\g@addto@macro\bfseries{\boldmath}
\g@addto@macro\mdseries{\unboldmath}
\g@addto@macro\normalfont{\unboldmath}
\g@addto@macro\rmfamily{\unboldmath}
\g@addto@macro\upshape{\unboldmath}
\renewcommand*{\multicitedelim}{\addcomma\space}
    \newlength{\temp@x}%
    \newlength{\temp@y}%
    \newlength{\temp@w}%
    \newlength{\temp@h}%
    \def\my@coords#1#2#3#4{%
      \setlength{\temp@x}{#1}%
      \setlength{\temp@y}{#2}%
      \setlength{\temp@w}{#3}%
      \setlength{\temp@h}{#4}%
      \adjustlengths{}%
      \my@pdfliteral{\strip@pt\temp@x\space\strip@pt\temp@y\space\strip@pt\temp@w\space\strip@pt\temp@h\space re}}%
      \def\my@pdfliteral#1{\pdfliteral page{#1}}
      \def\adjustlengths{}%
      \def\my@pdfliteral #1{}
      \def\adjustlengths{\setlength{\temp@h}{-\temp@h}\addtolength{\temp@y}{1in}\addtolength{\temp@x}{-1in}}%
    \def\Hy@colorlink#1{%
      \begingroup
        \ifHy@ocgcolorlinks
          \def\Hy@ocgcolor{#1}%
          \my@pdfliteral{q}%
          \my@pdfliteral{7 Tr}
        \else
          \HyColor@UseColor#1%
        \fi
    }%
    \def\Hy@endcolorlink{%
      \ifHy@ocgcolorlinks%
        \my@pdfliteral{/OC/OCPrint BDC}%
        \my@coords{0pt}{0pt}{\pdfpagewidth}{\pdfpageheight}%
        \my@pdfliteral{F}
        %
        \my@pdfliteral{EMC/OC/OCView BDC}%
        \begingroup%
          \expandafter\HyColor@UseColor\Hy@ocgcolor%
          \my@coords{0pt}{0pt}{\pdfpagewidth}{\pdfpageheight}%
          \my@pdfliteral{F}
        \endgroup%
        \my@pdfliteral{EMC}%
        \my@pdfliteral{0 Tr}
        \my@pdfliteral{Q}%
      \fi
      \endgroup
    }%
\colorlet{DarkRed}{red!50!black}
\colorlet{DarkGreen}{green!50!black}
\colorlet{DarkBlue}{blue!50!black}
\newlength{\leftstackrelawd} 
\newlength{\leftstackrelbwd}
\def\leftstackrel#1#2{\settowidth{\leftstackrelawd}%
{${{}^{#1}}$}\settowidth{\leftstackrelbwd}{$#2$}%
\addtolength{\leftstackrelawd}{-\leftstackrelbwd}%
\leavevmode\ifthenelse{\lengthtest{\leftstackrelawd>0pt}}%
{\kern-.5\leftstackrelawd}{}\mathrel{\mathop{#2}\limits^{#1}}}
\newcommand{\mylabel}[2]{#2\def\@currentlabel{#2}\label{#1}}
\declaretheorem[numberwithin=section]{theorem}
\declaretheorem[numberlike=theorem]{lemma}
\declaretheorem[numberlike=theorem]{corollary}
\declaretheorem[numberlike=theorem]{definition}
\declaretheorem[numberlike=theorem]{observation}
\DeclareMathOperator{\polylog}{polylog}
\DeclareMathOperator{\poly}{poly}
\DeclareMathOperator{\lse}{lse}
\DeclareMathOperator{\Exp}{\mathbb{E}}
\DeclareMathOperator{\diag}{diag}
\providecommand{\qmaxstretch}[2][]{\ensuremath{(W^{-1} A_{#1}^T #2)_{\max}}}
\providecommand{\pobjvalue}[1]{\ensuremath{\ones^TW #1}}
\providecommand{\potfun}{\Phi}
\providecommand{\nablaalg}{\texttt{gradient\_transshipment}\xspace}
\providecommand{\softmax}[2][]{\lse_{#1}\left(#2\right)}
\providecommand{\h}{\tilde{h}}
\providecommand{\ssspalgo}{\texttt{sssp}\xspace}
\providecommand{\treealgo}{\texttt{primal\_tree}\xspace}
\providecommand{\bt}{\tilde b}
\providecommand{\xt}{\tilde x}
\providecommand{\ft}{\tilde f}
\providecommand{\sumarcs}{\mathfrak{s}}
\providecommand{\maxarcs}{\mathfrak{m}}
\newcommand{\ones}{\mathds{1}}
\newcommand{\ZZ}{\ensuremath{\mathbb{Z}}}
\newcommand{\RR}{\ensuremath{\mathbb{R}}}
\renewcommand{\epsilon}{\varepsilon}
\title{Near-Optimal Approximate Shortest~Paths and Transshipment in Distributed and Streaming Models\thanks{Accepted to \emph{SIAM Journal on Computing}. A preliminary version of this paper was presented at the \emph{31st International Symposium on Distributed Computing (DISC 2017)}.}}
\author{
  Ruben Becker\thanks{Gran Sasso Science Institute, L'Aquila, Italy}
  \and
  Sebastian Forster\thanks{Department of Computer Sciences, University of Salzburg, Austria; this author previously published under the name Sebastian Krinninger}
  \and
  Andreas Karrenbauer\thanks{Max Planck Institute for Informatics, Saarland Informatics Campus, Saarbrücken, Germany}
  \and
  Christoph Lenzen\thanks{Max Planck Institute for Informatics, Saarland Informatics Campus, Saarbrücken, Germany}
}
\date{}
\begin{document}
\maketitle
\begin{abstract}
We present a method for solving the \emph{transshipment} problem---also known as uncapacitated minimum cost flow---up to a multiplicative error of $ 1 + \varepsilon $ in undirected graphs with non-negative edge weights using a tailored gradient descent algorithm.
Using $\widetilde{O}(\cdot)$ to hide polylogarithmic factors in $n$ (the number of nodes in the graph), our gradient descent algorithm takes $ \widetilde O(\varepsilon^{-2}) $ iterations, and in each iteration it solves an instance of the transshipment problem up to a multiplicative error of $ \polylog n $.
In particular, this allows us to perform a single iteration by computing a solution on a sparse spanner of logarithmic stretch.
Using a randomized rounding scheme, we can further extend the method to finding approximate solutions for the single-source shortest paths (SSSP) problem.
As a consequence, we improve upon prior works by obtaining the following results:
\begin{enumerate}
  \item \textit{Broadcast CONGEST model:} $ (1 + \varepsilon) $-approximate SSSP using $ \widetilde{O} ((\sqrt{n} + D) \varepsilon^{-3}) $ rounds, where $ D $ is the (hop) diameter of the network.
  \item \textit{Broadcast congested clique model:} $ (1 + \varepsilon) $-approximate transshipment and SSSP using $ \widetilde{O} (\varepsilon^{-2}) $ rounds.
  \item \textit{Multipass streaming model:} $ (1 + \varepsilon) $-approximate transshipment and SSSP using $ \widetilde{O} (n) $ space and $ \widetilde{O} (\varepsilon^{-2}) $ passes.
\end{enumerate}
The previously fastest SSSP algorithms for these models leverage sparse hop sets.
We bypass the hop set construction; computing a spanner is sufficient with our method.
The above bounds assume non-negative edge weights that are polynomially bounded in $n$; for general non-negative weights, there is an additional multiplicative overhead equal to the logarithm of the maximum ratio between non-zero weights.
Our algorithms can also handle asymmetric costs for traversing edges in opposite directions. In this case, we obtain an additional multiplicative dependence of the maximum ratio between the two costs on some edge.

\end{abstract}

\newpage

\tableofcontents

\newpage

\section{Introduction}

Single-source shortest paths (SSSP) is a fundamental and well-studied problem in computer science.
Thanks to sophisticated algorithms and data structures~\cite{FredmanT87,Thorup99,HansenKTZ15}, it has been known for a long time how to obtain (near-)optimal running time in the RAM model.
This is not the case in non-centralized models of computation, which become more and more relevant in a big-data world.
Despite progress for \emph{exact} SSSP algorithms~\cite{Spencer97,KleinS97,Cohen97,BrodalTZ98,ShiS99,CensorHillelKKLPS15,Gall16,Elkin17,GhaffariL18,ForsterN18}, there remain large gaps to the strongest known lower bounds.
Close-to-optimal running times have so far only been achieved by efficient approximation schemes~\cite{Blelloch0ST16,Cohen00,MeyerS03,MillerPVX15,HenzingerKN16,ElkinN16}.
For instance, in the CONGEST model of distributed computing, the state of the art is as follows:
exact SSSP on weighted graphs can be computed in $ \widetilde O (\sqrt{nD}) $\footnote{Throughout this paper we use $\widetilde{O}(\cdot)$-notation to hide polylogarithmic factors in $n$.} or $\widetilde O(\sqrt{n}D^{1/4} + n^{3/5} + D)$ rounds~\cite{ForsterN18}, where $ D $ is the (hop) diameter of the graph, and $ (1 + \varepsilon) $-approximate SSSP can be computed in $ (\sqrt{n} + D) \cdot 2^{O (\sqrt{\log{n} \log{(\varepsilon^{-1} \log{n})}})}$ rounds~\cite{HenzingerKN16}.\footnote{Note that these running times refer to weighted graphs. In unweighted graphs, the SSSP problem can easily be solved in $ O(D) $ rounds by performing a BFS tree computation.}
Even for constant $\varepsilon$, the latter exceeds the strongest known lower bound of $ \Omega (\sqrt{n / \log n} + D) $ rounds~\cite{Elkin06} by a super-polylogarithmic factor.
The techniques developed in the present paper allow us to make a qualitative algorithmic improvement for $ (1 + \varepsilon) $-approximate SSSP in this model: we solve the problem in $ \widetilde O ( (\sqrt{n} + D) \cdot \varepsilon^{-2}) $ rounds.
We thus narrow the gap between upper and lower bound significantly and additionally improve the dependence on $ \varepsilon $.
Our new approach achieves its superior running time by leveraging techniques from continuous optimization.

Our approach is based on tackling a problem that is more general than SSSP.
In the \emph{transshipment} problem, we seek to find a cheapest routing for sending units of a single good from sources to sinks along the edges of a graph meeting the nodes' demands.
Equivalently, we want to find the minimum-cost flow in a graph where edges have unlimited capacity.
The special case of SSSP can be modeled as a transshipment problem by setting the demand of the source to $ -n + 1 $ (thus supplying $ -n + 1 $ units) and the demand of every other node to~$ 1 $.
Unfortunately, this relation breaks when we consider approximation schemes:
A $(1+\varepsilon)$-approximate solution to the transshipment problem merely yields $(1+\varepsilon)$-approximations to the distances \emph{on average.}
In the special case of SSSP, however, one is interested in obtaining a $(1+\varepsilon)$-approximation to the distance for \emph{each single node} and we show how to extend our algorithm to provide such a guarantee as well.

Techniques from continuous optimization have been key to recent breakthroughs in the combinatorial realm of graph algorithms~\cite{DaitchS08,ChristianoKMST11,Sherman13,KelnerLOS14,Madry13,LeeS14,CohenMSV16}.
In this paper, we apply this paradigm to computing $ (1 + \varepsilon) $-approximate primal and dual solutions to the transshipment problem in undirected graphs with non-negative edge weights.
Accordingly, we perform projected gradient descent for a suitable norm-minimization formulation of the problem, where we approximate the infinity norm by a differentiable soft-max function.
To make this general approach work in our setting, we need to add significant problem-specific tweaks.
In particular, we develop a gradient descent algorithm that reduces the problem of computing a $ (1+\varepsilon) $-approximation to the more relaxed problem of computing a crude (e.g.\ $ O (\log{n}) $ factor) approximation.
We then exploit that an $ O(\log{n}) $-approximation can be computed very efficiently by solving the problem on a sparse spanner, and that it is well-known how to compute sparse spanners efficiently.
To obtain the aforementioned per-node guarantee in the approximate SSSP problem, we provide an intricate randomized rounding scheme that yields an approximate shortest-path tree (i.e., a primal solution) in addition to the dual solution (i.e., estimated distances to the source).

Our method is widely applicable among a plurality of non-centralized models of computation in rather straightforward ways.
We obtain the first non-trivial algorithms for approximate undirected transshipment in the Broadcast CONGEST, Broadcast Congested Clique, and Multipass Streaming models.
As a further, arguably more important, consequence, we improve upon prior results for computing approximate SSSP in these models.
Our approximate SSSP algorithms are the first to be provably optimal up to polylogarithmic factors.

\subparagraph*{Our Contributions and Results}

We summarize our technical and conceptual contributions as follows:
\begin{enumerate}
\item[\mylabel{contribution gradient descent}{(C1)}] We give a problem-specific gradient descent algorithm for approximating the transshipment problem, which requires access to an oracle computing an $ \alpha $-approximate dual solution for any given demand vector.\footnote{Note that dual feasibility is crucial here. In particular, this rules out an oracle based on tree embeddings~\cite{Bartal98,ElkinEST08}, as such trees might have stretch $ \Omega (n) $ on individual edges.}
To compute a $ (1+\varepsilon) $-approximation, the algorithm performs $ \widetilde O (\varepsilon^{-2} \alpha^2) $ oracle calls.
If the oracle returns primal solutions, so does our algorithm.
\item[\mylabel{contribution analysis of gradient}{(C2)}] We provide a randomized rounding scheme that, given the output of the gradient descent algorithm for an SSSP problem, can be used to obtain distance estimates that satisfy a per-node approximation guarantee.
\item[\mylabel{contribution solving on spanner}{(C3)}] We observe that spanners can be used to obtain an efficient transshipment oracle with approximation guarantee $\alpha \in O(\log n)$.
\end{enumerate}
By implementing our method in specific models of computation, we obtain the following algorithmic results in graphs with non-negative polynomially bounded\footnote{For general non-negative weights, running times scale by a multiplicative factor of $\log R$, where $R$ is the maximum ratio between non-zero edge weights.} edge weights:
\begin{enumerate}
  \item[\mylabel{result SSSP}{(R1)}] We give faster Las Vegas algorithms\footnote{The conference version of this paper claimed deterministic bounds for computing approximate single-source distances. However, this claim was supported with inaccurate arguments. The deterministic guarantee can still be obtained for computing the $s$-$t$ distance between two nodes.} for computing $ (1 + \varepsilon) $-approximate SSSP:
  \begin{enumerate}
  \item \emph{Broadcast CONGEST model:} We obtain an algorithm for computing $ (1 + \varepsilon) $-approximate SSSP in $ \widetilde{O} ((\sqrt{n} + D) \cdot \varepsilon^{-3}) $ rounds.
  This improves upon the previous best upper bound of $ (\sqrt{n} + D) \cdot 2^{O (\sqrt{\log{n} \log{(\varepsilon^{-1} \log{n})}})} $ rounds~\cite{HenzingerKN16}.
  For $\varepsilon^{-1}\in O (\polylog n)$, we match, up to polylogarithmic factors in $ n $, the lower bound of $ \widetilde\Omega (\sqrt{n} + D) $~\cite{Elkin06,DasSarmaHKKNPPW12}, which applies to any ($ \poly{n} $)-approximation of the distance between two fixed nodes in a weighted undirected graph.
  If a Monte-Carlo guarantee suffices, we obtain a slightly better guarantee of $ \widetilde{O} ((\sqrt{n} + D) \cdot \varepsilon^{-3/2}) $.
  \item \emph{Broadcast Congested Clique model:} We obtain an algorithm for computing $ (1 + \varepsilon) $-approximate SSSP using $ \widetilde{O} (\varepsilon^{-2}) $ rounds.
  This improves upon the previous best upper bound of $ 2^{O (\sqrt{\log{n} \log{(\varepsilon^{-1} \log{n})}})} $ rounds~\cite{HenzingerKN16}.
  \item \emph{Multipass Streaming model:} We obtain an algorithm for computing $ (1 + \varepsilon) $-approximate SSSP using $ \widetilde{O} (\varepsilon^{-2}) $ passes and $ O (n \log^2{n}) $ space.
This improves upon the previous best upper bound of $ (2+1/\varepsilon)^{O(\sqrt{\log{n} \log \log n})} $ passes and $ O (n \log^2 n) $ space~\cite{ElkinN16}.
By setting $ \varepsilon $ small enough, we can compute distances up to the value $ \log n $ exactly in integer-weighted graphs using $\polylog n $ passes and $ O (n \log^2{n}) $ space.
Thus, up to polylogarithmic factors in $ n $, our result matches a lower bound of $ \frac{n^{1 + \Omega(1/p)}}{\poly{p}} $ space for all algorithms that decide in $ p $ passes if the distance between two fixed nodes in an unweighted undirected graph is at most $ 2 (p + 1) $ for any $ p = O (\log{n} / \log{\log{n}}) $~\cite{GuruswamiO13}.
  \end{enumerate}
\item[\mylabel{result transshipment}{(R2)}] We give fast algorithms for computing $ (1 + \varepsilon) $-approximate transshipments. No non-trivial upper bounds were known in each of these models before.
  \begin{enumerate}
  \item \emph{Broadcast CONGEST model:} A deterministic algorithm using $ \widetilde{O} (n \varepsilon^{-2}) $ rounds.
  \item \emph{Broadcast Congested Clique model:} A deterministic algorithm running in $ \widetilde{O} (\varepsilon^{-2}) $ rounds.
  \item \emph{Multipass Streaming model:} A deterministic algorithm using $ \widetilde{O} (\varepsilon^{-2}) $ passes and $ O (n \log{n}) $ space.
  \end{enumerate}
\end{enumerate}

In the case of SSSP, we can compute a $ (1 + \varepsilon) $-approximation to the distance of the source to every node together with a tree such that the length of the path from the source to any node is within a factor of $ (1 + \varepsilon) $ of its true distance.

In the case of the transshipment problem, we can (deterministically) return $ (1 + \varepsilon) $-approximate primal and dual solutions.
We can further extend the results to bidirected graphs, where each edge can be used in either direction with potentially asymmetric weights.
Denoting by $ \lambda \geq 1 $ the maximum over all edges of the weight ratio between traversing the edge in different directions, our algorithms give the same guarantees if the number of rounds or passes, respectively, is increased by a factor of $ \lambda^2 \log{\lambda} $.

\subparagraph*{Related Work on the Transshipment Problem}

Transshipment is a classic problem in combinatorial optimization~\cite{KorteV00,Schrijver03}.
The classic algorithms for directed graphs with non-negative edge weights in the RAM model run in time $ O (n (m + n \log{n}) \log{n}) $~\cite{Orlin93} and $ O ((m + n \log{n}) B) $~\cite{EdmondsK72}, respectively, where $ B $ is the sum of the nodes' demands (when they are given as integers) and the term $ m + n \log{n} $ comes from SSSP computations.
If the graph contains negative edge weights, then these algorithms require an additional preprocessing step to compute SSSP in presence of negative edge weights, for example in time $ O (m n) $ using the Bellman-Ford algorithm~\cite{Bellman58,Ford56} or in time $ O (m \sqrt{n} \log N) $ using Goldberg's algorithm~\cite{Goldberg95}.\footnote{Goldberg's running time bound holds for integer-weighted graphs with most negative weight $ -N $.}
The weakly polynomial running time was first improved to $ \widetilde O (m^{3/2} \polylog R) $~\cite{DaitchS08} and then to $ \widetilde O (m \sqrt{n} \polylog R) $ in a recent breakthrough for minimum-cost flow~\cite{LeeS14}, where $ R $ is the maximum ratio between edge weights.
Recently, Sherman~\cite{Sherman17} obtained a randomized algorithm for computing a $ (1+\varepsilon) $-approximate transshipment in weighted undirected graphs in time $ O(\varepsilon^{-2}m^{1+o(1)}) $ using a preconditioning approach.
Note that Sherman's preconditioner gives a condition number of $ n^{o(1)} $ and thus is not suitable for obtaining a polylogarithmic iteration count as in our approach.
We are not aware of any non-trivial algorithms for computing (approximate) transshipments in non-centralized models of computation, such as distributed or streaming models.

\subparagraph*{Comparison to Hop Set Based SSSP Algorithms}
The state-of-the art SSSP algorithms in the distributed CONGEST model follow the framework developed in~\cite{Nanongkai14}, where (1) the problem of computing SSSP is reduced to an overlay network of size $ N = \widetilde O (\sqrt{n}) $ and (2) a sparse hop set is constructed to speed up computing SSSP on the overlay network.
An $ (h, \varepsilon) $-hop set is a set of weighted edges that, when added to the original graph, provides sufficient shortcuts to approximate all pairwise distances using paths with only $ h $ edges (``hops'').
The algorithm by Nanongkai et al.~\cite{HenzingerKN16} achieves an upper bound of $ (\sqrt{n} + D) \cdot 2^{O (\sqrt{\log{n} \log{(\varepsilon^{-1} \log{n})}})} $ on the number of rounds by constructing an $ (h, \varepsilon) $-hop set of size $ O (N \rho) $ where $ h \leq 2^{O (\sqrt{\log{n} \log{(\varepsilon^{-1} \log{n})}})} $ and $ \rho \leq 2^{O (\sqrt{\log{n} \log{(\varepsilon^{-1} \log{n})}})} $.
Elkin's algorithm~\cite{Elkin17}, which takes $ O (D^{1/3} (n \log{n})^{2/3}) $ rounds, uses an exact $ (N / \rho, 0) $-hop set of size $ O (N \rho) $ similar to the one developed by Shi and Spencer for the PRAM model ~\cite{Spencer97}.
Elkin's main technical contribution lies in showing how to compute this hop set without constructing the overlay network explicitly.
Roughly speaking, in all these algorithms, \emph{both} $ h $ and $ \rho $ enter the running time of the corresponding SSSP algorithms, in addition to the time needed to construct the hop set.

The concept of hop sets has been introduced by Cohen in the context of PRAM algorithms for approximate SSSP~\cite{Cohen00}.
The increased interest in hop sets and their applications in the last years~\cite{Bernstein09,HenzingerKN18,MillerPVX15,HenzingerKN16,ElkinN16} has culminated in the construction of $ (h, \varepsilon) $-hop sets of size $ O (n^{1 + \frac{1}{2^{k + 1}} - 1}) $ for $ h = O \left((\tfrac{k}{\varepsilon})^{k}\right) $~\cite{HuangP19,ElkinN19}.
Recent lower bounds by Abboud et al.~\cite{AbboudBP18} show that this trade-off is essentially tight: for any integer~$ k, $ any construction of an $ (h, \varepsilon) $-hop set of size $ \leq n^{1 + \frac{1}{2^k - 1} - \delta} $ (for $ \delta > 0 $) must have $ h = \Omega (c_k / \epsilon^{k + 1}) $, where $ c_k $ is a constant depending only on $ k $.
This implies that the hop set based algorithms, as long as the factor $ \rho $ has to be paid in the running time for construction hop sets of size $ n \rho $, will never be able to achieve a running time comparable to our SSSP algorithm exclusively by finding better hop sets.

\subparagraph*{Further Related Work}\label{sec:further related work}
In the following we review further results on (approximate) SSSP in the non-centralized models considered in this paper.

In the \emph{CONGEST} model of distributed computing, SSSP on unweighted graphs can be computed exactly in $ O (D) $ rounds by distributed breadth-first search~\cite{Peleg:book}.
For weighted graphs, the only non-trivial algorithm known is the distributed version of Bellman-Ford~\cite{Bellman58,Ford56}, which uses $ O (n) $ rounds.
In terms of approximation, Elkin~\cite{Elkin06} showed that computing an $ \alpha $-approximate SSSP tree requires $ \Omega ((n/\alpha)^{1/2} / \log{n} + D) $ rounds.
Together with many other lower bounds, this was strengthened in~\cite{DasSarmaHKKNPPW12} by showing that computing a $ (\poly{n}) $-approximation of the distance between two fixed nodes in a weighted undirected graph requires $ \Omega (\sqrt{n} / \log{n} + D) $ rounds.
The lower bounds were complemented by two SSSP algorithms: a randomized $ O (\alpha \log \alpha) $-approximation using $ \widetilde O (n^{1/2+1/\alpha} + D) $ rounds~\cite{LenzenPS13} and a randomized $ (1 + o(1)) $-approximation using $ \widetilde O (n^{1/2} D^{1/4} + D) $ rounds~\cite{Nanongkai14}.
Both results were improved upon in~\cite{HenzingerKN16} by a deterministic algorithm that computes $ (1 + o(1)) $-approximate SSSP in $ n^{1/2 + o(1)} + D^{1 + o(1)} $ rounds.
A recent hop set construction of Elkin and Neiman~\cite{ElkinN16} improves the running time for computing shortest paths from multiple sources.

The \emph{Congested Clique} model~\cite{LotkerPPP03} has seen increasing interest in the past years as it highlights the aspect of limited bandwidth in distributed computing, yet excludes the possibility of explicit bottlenecks (e.g., a bridge that would limit the flow of information between the parts of the graph it connects to $O(\log n)$ bits per round in the CONGEST model).
For weighted graphs, SSSP can again be computed exactly in $ O (n) $ rounds.
The first approximation was given by Nanongkai~\cite{Nanongkai14} with a randomized algorithm for computing  $ (1+o(1)) $-approximate SSSP in $ \widetilde O (\sqrt{n}) $ rounds.
All-pairs shortest paths in the Congested Clique model can be computed deterministically in $ \widetilde O (n^{1/3}) $ rounds for an exact result and in $ O(n^{0.158}) $ rounds for a $ (1+o(1)) $-approximation~\cite{CensorHillelKKLPS15}.
The time for computing all-pairs shortest paths exactly has subsequently been improved to $ O (n^{0.2096}) $ rounds~\cite{Gall16}.
The hop set construction of~\cite{HenzingerKN16} gives a deterministic algorithm for computing $ (1+o(1)) $-approximate SSSP in $ n^{o(1)} $ rounds.
A recent hop set construction of Elkin and Neiman~\cite{ElkinN16} improves the running time for computing shortest paths from multiple sources.
We note that the approaches of~\cite{HenzingerKN16} and~\cite{ElkinN16} as well as our approach can actually operate in the more restricted \emph{Broadcast Congested Clique} and \emph{Broadcast CONGEST} models, in which in each round, each node sends the \emph{same} message to all other nodes or all its neighbors, respectively.
After the preliminary version of our paper appeared, Censor-Hillel et al.~\cite{Censor-HillelDK19} presented a deterministic algorithm that computes $ (1 + \epsilon) $-approximate multi-source shortest paths from up to $ O (\sqrt{n}) $ sources in $ O (\log^2 n / \epsilon) $ rounds in the Congested Clique model.

In the \emph{Streaming} model, two approaches were known for (approximate) SSSP before the algorithm of~\cite{HenzingerKN16}.
First, shortest paths up to distance $ d $ can be computed using $ d $ passes and $ O (n) $ space in unweighted graphs by breadth-first search.
Second, approximate shortest paths can be computed by first obtaining a sparse spanner and then computing distances on the spanner without additional passes~\cite{FeigenbaumKMSZ05,ElkinZ06,FeigenbaumKMSZ08,Baswana08,Elkin11}.
This leads, for example, to a randomized $ (2k-1) $-approximate all-pairs shortest paths algorithm using $ 1 $ pass and $ \widetilde O (n^{1+1/k}) $ space for any integer $ k \geq 2 $ in unweighted undirected graphs.
In unweighted undirected graphs, the spanner construction of~\cite{ElkinZ06} can be used to compute $ (1+o(1)) $-approximate SSSP using $ O (1) $ passes and $ O (n^{1+o(1)}) $ space.
The hop set construction of~\cite{HenzingerKN16} gives a deterministic algorithm for computing $ (1+o(1)) $-approximate SSSP in weighted undirected graphs using $ n^{o(1)} $ passes and $ n^{1+o(1)} $ space.
Using randomization, this was improved to $ n^{o(1)} $ passes and $ O (n \log^2 n) $ space by Elkin and Neiman~\cite{ElkinN16}.
These upper bounds are complemented by a lower bound of $ n^{1 + \Omega(1/p)} / (\poly{p}) $ space for all algorithms that decide in $ p $ passes if the distance between two fixed nodes in an unweighted undirected graph is at most $ 2 (p + 1) $ for any $ p = O (\log{n} / \log{\log{n}}) $~\cite{GuruswamiO13}.
Note that this lower bound in particular applies to all algorithms that provide $ 1 + \varepsilon $ approximations for $ \varepsilon < 1 / 2 (p + 1) $ in integer-weighted graphs, as this level of precision requires to compute shortest paths for small enough distances exactly.

\section{General Approach for Solving Transshipment and SSSP}\label{sec:transshipment}

We consider a bidirected graph $ G = (V, E, w) $ where $ V $ is the set of $ n $ nodes, $ E $ is the set of directed arcs, and $ w $ assigns a positive integer weight $ w_{(u,v)} $ to every arc $ (u, v) \in E $.
Being bidirected means that for every arc $ (u, v) \in E $ there also exists a reverse arc $ (v, u) \in E $.
We can thus view the arcs $ (u, v) \in E $ and $ (v, u) \in E $ as two orientations of an undirected edge $ \{u, v\} \in \bar{E} $, where $ \bar{E} = \{ \{u, v \} \in \binom{V}{2} \mid (u, v) \in E \} $ and $ m := | \bar{E} | $.
For every $ e \in \bar{E} $ we call one of the orientations the forward arc of weight~$ w^+_e $ and the other orientation the backward arc of weight $ w^-_e $.
Since the choice of names is arbitrary, let us, w.l.o.g., assume that $w^+\ge w^-$. Define $\lambda:=\max\left\{w^+_e/w^-_e: e\in \bar{E}\right\} \ge 1$ as the maximum ratio over all edges between the forward and the backward weight.
Every weighted undirected graph can be readily modeled as a weighted bidirected graph with $ \lambda = 1 $.
As $\lambda$ is easily determined and only a single value, we assume that it is a globally known parameter in the following. Furthermore, let $W=\diag(w^+, w^-)$ denote the $2m\times 2m$ diagonal matrix containing the forward and backward weights and let $b\in\ZZ^n$ be a vector of demands. W.l.o.g., we restrict to feasible and non-trivial instances, i.e., $b^T \ones = 0$ and $b\neq 0$.\footnote{Here $ \ones $ denotes the all-ones vector and thus $ b^T \ones = 0 $ simply means that the positive demands equal the negative demands (i.e., the supplies).}

As we will now briefly argue, it suffices to formulate our algorithms for positive integer arc weights from $ 1 $ to $ \| w \|_\infty $.
First, arbitrary positive arc weights can be reduced to positive integer arc weights by replacing each arc weight $ w_e $ by $ w_e' = \lceil \tfrac{w_e}{\epsilon w_{\min}} \rceil $ for $ w_{\min} = \min_{e \in E} \{w_e\} $, which conceptually corresponds to rounding up each arc weight to the next multiple of $ \epsilon w_{\min} $ and carrying out the computation with weights equal to the corresponding multiplicities.
This increases the weight of any path by a factor of at most $ (1 + \epsilon) $ and bounds the maximum arc weight $ \| w' \|_\infty $ by $ O (\epsilon^{-1} R) $, where $ R $ is the maximum ratio between non-zero arc weights in the original graph. As $ \epsilon^{-1} $ may be assumed to be polynomial in $ n $ -- otherwise exact algorithms will be faster than our approximation algorithms -- this bounds $ \log \| w' \|_\infty $ by $ O (\log n + \log R) $.
Second, given positive integer weights, we can deal with arc weight $ 0 $ by using weight $ w_e'' = 1 + (1 + \epsilon^{-1}) n w_e' $ for every arc $ e $, assuming without loss of generality that $ \epsilon^{-1} $ is integer (see, e.g.,~\cite{ForsterN18} for details). As this increases the maximum arc weight by a factor of $O(n \epsilon^{-1})$, also $ \log \| w'' \|_\infty $ is bounded by $ O (\log n + \log R) $.

A common approach to model the \emph{bidirected transshipment} problem as a linear program considers the node-arc-incidence matrix $A \in \{-1,0,1\}^{n \times 2m}$ of the bidirected graph.\footnote{The incidence matrix $ A $ of a directed graph contains a row for every node and a column for every arc and $ A_{ij} $ is $ 1 $ if the $j$-th arc enters the $i$-th node, $ -1 $ if it leaves the node, and $ 0 $ otherwise.}
For notational convenience, we list the columns of the forward arcs first, followed by the columns for the backward arcs in the same order.
The asymmetric transshipment problem can then be written as a primal/dual pair of linear programs:
\begin{equation}\label{formula:asymmetric}
	\min \{ \ones^TWx : Ax=b, x\ge 0 \}
	= \max\{b^Ty: (W^{-1} A^Ty)_{\max} \le 1\},
\end{equation}
where for $z\in \RR^d$ we denote by $(z)_{\max}:=\max\{z_i\colon i\in [d]\}$ the maximum entry of $z$. The primal (left) program asks to ``ship'' the flow given by $b$ from sources (negative demand) to sinks (positive demand) along the arcs of the graph, minimizing the cost of the flow, i.e., $\sum_{(u,v)\in E}w_{(u,v)} x_{uv}$. Note that if both $x_{uv}>0$ and $x_{vu}>0$, we can reduce both variables by $\min\{x_{uv},x_{vu}\}$ without changing whether $Ax=b$, but reducing the objective. Thus, an optimal solution sends flow only in one direction over any edge.

The dual (right) program asks for node potentials $y$ such that for each arc $(u,v)\in E$, $y_v-y_u\leq w_{(u,v)}$, maximizing $b^Ty$. Note that, because $b^T \ones = 0$, shifting the potential by $r \cdot \ones$ for any $r\in \RR$ does neither change $b^Ty$ nor $y_u-y_v$ for any $u,v\in V$. The goal of the dual is thus to maximize the differences in potential of sources and sinks (weighted according to $b$), subject to the constraint that the potential of $u$ does not exceed the potential of $v$ by more than $w_{(u,v)}$ for any neighbor $v$ of $u$.

In the special case of SSSP with source $s\in V$, we have that (i) $b_s=-n+1$ and $b_v=1$ for all $v\neq s$, (ii) an optimal primal solution $x^*$ is given by routing, for each $s\neq v\in V$, one unit of flow along a shortest path from $s$ to $v$, and (iii) optimal potentials $y^*$ are given by setting $y_v^*$ to the distance from $s$ to $v$.

\subsection{Overview of our Approach}\label{sec:overview}

We will employ gradient descent on a suitable potential function to converge to a near-optimal solution. If we can ensure large (relative) progress, few iterations of gradient descent suffice, and we can hope for the high parallelism that is crucial for good results in our target models of computation. Naturally, we will also require the iterations to allow for an efficient implementation. We will use a sparse representation of the distance structure of the graph, which can be ``kept available'' to determine approximate solutions to intermediate problems at any time.

A generic (unconstrained, adaptive step size, minimizing) gradient descent algorithm operates as follows:
\begin{enumerate}
  \item Pick a differentiable potential function that reflects the objective well and does not change too quickly. A convex potential function is desirable, as it guarantees the existence of an optimum solution and non-zero gradient at non-optimal solutions.
  \item Find a reasonable starting solution (a poor approximation typically suffices).
  \item Proceed in iterations. In each iteration:
  \begin{enumerate}
    \item Determine the gradient of the potential function at the current solution.
    \item Determine a direction for the update in which the gradient indicates that the potential reduces quickly.
    \item Choose a large step size, under the constraint that the gradient does not change too much along the way, and update the solution according to direction and step size.
    \item Check whether the solution is sufficiently close to the optimum. If yes, terminate and return the current solution. Otherwise, proceed with the next iteration.
  \end{enumerate}
\end{enumerate}
The key to minimizing the number of iterations is to find good update directions, in which the gradient is large \emph{and} does not change too quickly. In contrast, the key to fast iterations is to avoid too costly calculations for finding such update directions. The choice of the ``right'' potential function is crucial for being able to achieve a good compromise between these conflicting goals. It should also be noted that an efficient test for termination is required. Fortunately, our potential function will not only be convex, but being able to make significant progress will be equivalent to being far from the optimum; thus, the determined step size (which corresponds to the estimated reduction of the potential) can be used as a simple tool to bound the approximation ratio of the current solution.

Inconveniently, both the primal and dual program in \eqref{formula:asymmetric} have constraints. There are various techniques to handle this issue in gradient descent methods.
We will rephrase the dual problem so that a single equality constraint (defining an $(n-1)$-dimensional affine hyperplane) remains.
We then perform an unconstrained gradient descent on the subspace given by the hyperplane, i.e., we restrict to updates that do not affect this constraint.

This strategy requires to overcome a number of challenges, which we discuss before proceeding to presenting the high-level algorithm. First, we show that a sparse spanner gives rise to an efficient oracle yielding approximate solutions to transshipment problems in Section~\ref{sec:oracle}. This oracle provides a sufficiently good starting solution and, more importantly, is our tool for determining good update directions for the gradient descent steps. We then proceed to rephrasing the transshipment problem and defining a suitable potential function in Section~\ref{sec:potential}. We also analyze its gradient to establish the key properties needed for showing that the update steps guarantee fast progress and, at termination, a close-to-optimal solution. In Section~\ref{sec:algorithm}, we state the pseudocode of the algorithm and prove its correctness and give a bound on its number of iterations. The algorithm computes a dual solution only. However, we show how to also obtain a primal solution from an additional call to the oracle, under the condition that the oracle is capable of providing not only dual, but also primal solutions; the oracle from Section~\ref{sec:oracle} meets this criterion.

\subsection{An Efficient Oracle for \texorpdfstring{$O(\log n)$}{O(log n)}-Approximate Solutions}\label{sec:oracle}

Our approach is based on guiding gradient descent using approximate solutions to intermediate transshipment problems. To this end, an efficient way of obtaining such solutions is required. We phrase this in terms of calling an oracle that provides dual solutions to~\eqref{formula:asymmetric}. If a primal solution is required, too, the oracle must also be able to provide primal solutions.

For the sake of concreteness, we specify how to obtain such an oracle already at this point. It turns out that computing an exact solution on a \emph{sparse spanner} is a good choice for all applications of our method demonstrated in this article.
\begin{definition}[Spanner]
Given $G=(V,E,w)$ and $\alpha \geq 1$, an \emph{$\alpha$-spanner} of~$G$ is a subgraph $(V,E',w|_{E'})$, $E'\subseteq E$, in which distances are at most by factor $\alpha$ larger than in $G$.
\end{definition}
In other words, a spanner removes edges from $G$ while approximately preserving distances. It is well-known that for every undirected graph we can efficiently compute an $\alpha$-spanner of size $ O (n \log{n}) $ with $ \alpha = O (\log{n}) $. For the sake of completeness, we discuss derandomized implementations of the Baswana-Sen spanner construction~\cite{BaswanaS07} suitable for the computational models under consideration; this is deferred to Section~\ref{sec:spanner}\footnote{
    Note that while it is possible to construct spanners with $ O(n) $ edges, the fastest known such algorithms for weighted graphs are usually less efficient than those for constructing spanners with $ O(n \log n) $ edges.
}.

In all cases, the sparse representation of the approximate distance structure of the graph can be kept ``available,'' i.e., in the broadcast congested clique and broadcast congest models, we can make a spanner globally known, and in the multipass streaming model we may keep a spanner in memory, cf.~Section~\ref{sec:apps}. Note that, as the graph is bidirected with potentially asymmetric arc weights, an undirected $\alpha$-spanner construction cannot be directly applied to the bidirected graph~$G$. However, we can instead consider the symmetrized problem
\begin{equation}\label{formula:symmetric}
	\min \{ \ones^TW_-x : Ax=b, x\ge 0 \}
	= \max\{b^Ty: (W_-^{-1} A^Ty)_{\max} \le 1\},
\end{equation}
where $W_-=\diag(w^-,w^-)$.
\begin{observation}\label{obs:symmetric}
Feasible primal and dual solutions $x$ and $y$ of \eqref{formula:symmetric} are also feasible solutions of \eqref{formula:asymmetric}, where
\begin{enumerate}[(i)]
 \item $\ones^TW_-x\leq \ones^TWx\leq \lambda \ones^TW_-x$, and
 \item $\tfrac{1}{\lambda}(W_-^{-1} A^T y)_{\max} \le (W^{-1} A^T y)_{\max} \le (W_-^{-1} A^T y)_{\max}$.
\end{enumerate}
In particular, the objective value of $x$ increases by a factor between $1$ and $\lambda$ (while the objective value of $y$ is unaffected).
\end{observation}
We will not make direct use of this observation beyond obtaining a starting solution, as employing the symmetric weights $W_-$ streamlines our reasoning. However, this observation provides the intuition that the asymmetry of weights reduces the quality of approximation provided by the spanner by a factor of $\lambda$ and therefore should result in replacing $\alpha$ by $\alpha \lambda$ in time bounds; essentially, this intuition turns out to be correct.

Given an $\alpha$-spanner of the undirected graph $G_-=(V,E,w^-)$ and a demand vector~$b$, it is straightforward to compute an $ (\alpha \lambda) $-approximate \emph{primal-dual pair} for~\eqref{formula:asymmetric}. By this we mean feasible $x$ and $y$ satisfying that $\alpha \lambda b^Ty\geq \ones^TWx$. As we will fix the vector $b$ of the original problem in the following sections, but need to solve intermediate problems for different demand vectors $\bt$, the lemma formalizing this statement is phrased accordingly.
\begin{lemma}\label{lemma:spanner}
Given an $\alpha$-spanner $S$ of $ G^- = (V, \bar{E}, w^-) $ and any $0\neq \bt\in \RR^n$ with $\bt^T\ones = 0$, an $\alpha$-approximate primal-dual pair of solutions to~\eqref{formula:symmetric} with demands $\bt$ can be computed (without further knowledge of $G$).
\end{lemma}
\begin{proof}
Let $x$ and $y$ be optimal solutions to \eqref{formula:asymmetric} on the spanner. Observe that any feasible primal solution on $S$, padded with $0$-entries for edges not present in $S$, is feasible on $G$. We show that $\alpha^{-1} y$, whose objective is by factor $\alpha$ smaller than that of~$y$, is a feasible dual solution on $G$. As the objectives of $x$ and $y$ on the spanner and the padded $x$ on $G$ are identical, this proves the claim.

Let $ \{u, v\}=e \in E $ be arbitrary. By the definition of a spanner, there must be a path $ P_{uv} $ of weight at most $\alpha w_e^-$ from $ u $ to $ v $ in $S$. We have
\begin{align*}
|(W_-^{-1} A^T y)_{(u, v)}| &= \frac{| y_v - y_u|}{w^-_e}\\
&\leq \frac{\sum_{\{u',v'\} = e' \in P_{uv}}| y_{v'} - y_{u'}|}{w^-_e}\\
 &= \frac{\sum_{\{u',v'\} = e' \in P_{uv}} w^-_{e'} |((w^-_{e'})^{-1} A^T y)_{(u',v')}|}{w^-_e}\\
 &\leq \frac{\sum_{\{u',v'\} =e' \in P_{uv}} w^-_{e'}}{w^-_e}\\
 &\leq \alpha.
\end{align*}
Thus, $ (W_-^{-1} A^T (\alpha^{-1}y))_{\max} \leq 1$, i.e., $\alpha^{-1}y$ is feasible on $G$, completing the proof.
\end{proof}
As mentioned before, we need to restrict our updates to maintain a certain constraint. As we will see in Section~\ref{sec:potential}, this constraint is that update steps must be orthogonal to $b$. Thus, instead of \eqref{formula:symmetric}, we will consider the program
\begin{equation}\label{formula:symmetric_constrained}
	\min \{ \ones^TW_-x : Ax + zb = \bt, x\ge 0 \}
	= \max\{\bt^Ty: (W_-^{-1} A^Ty)_{\max} \le 1 \wedge b^Ty=0\}
\end{equation}
where $ z $ is a one-dimensional variable.
The additional constraint of $b^Ty=0$ in the dual enforces said orthogonality. It is reflected in the primal program by relaxing the equality constraints to $Ax = \bt - zb$, i.e., shifting the demands by an arbitrary multiple of $b$. Note that feasible primal solutions of \eqref{formula:symmetric_constrained} on a spanner are still feasible on $G$ (after padding), and scaling dual solutions does not affect whether $b^Ty=0$ or not. Hence the same arguments as before yield that an $\alpha$-approximate pair can be computed based on an $\alpha$-spanner of $G$.
\begin{corollary}\label{coro:spanner}
Given an $\alpha$-spanner $S$ of $ G^- = (V, E, w^-) $ and any $0\neq \bt\in \RR^n$ with $\bt^T\ones = 0$, an $\alpha$-approximate primal-dual pair of solutions to~\eqref{formula:symmetric_constrained} with demands $\bt$ can be computed (without further knowledge of $G$).
\end{corollary}

\subsection{Potential Function for the Gradient Descent}\label{sec:potential}

In what follows, we consider $G$ and $b$ to be fixed, and denote by $y^*$ an optimal solution of the dual program, i.e., $b^Ty^*=\max\{b^Ty:(W^{-1} A^Ty)_{\max} \le 1\}$.
We relate the dual program to another linear program that normalizes the objective to $1$ and seeks to minimize $(W^{-1} A^Ty)_{\max}$ instead:
\begin{equation}\label{formula:USP2_short_asymmetric}
	\min\{(W^{-1} A^T\pi)_{\max}:b^T\pi=1\}.
\end{equation}
Let us denote by $\pi^*$ an optimal solution to this problem. There is a straightforward correspondence between feasible solutions of~\eqref{formula:asymmetric} and feasible solutions of~\eqref{formula:USP2_short_asymmetric}.
\begin{lemma}\label{lemma:trafo}
    Consider the map $\psi$ on vertex potentials $\pi$ with $(W^{-1}A^T\pi)_{\max}> 0$ defined by $\psi(\pi):= \frac{\pi}{(W^{-1}A^T\pi)_{\max}}$ and the map $\chi$ on vertex potentials $y$ with $b^Ty>0$ defined by $\chi(y):= \frac{y}{b^Ty}$.
    \begin{enumerate}
        \item If $\pi$ is a feasible solution of~\eqref{formula:USP2_short_asymmetric} with $(W^{-1}A^T\pi)_{\max}> 0$, then $\psi(\pi)$ defines a feasible solution of the dual in~\eqref{formula:asymmetric}.
		If $y$ is a feasible dual solution of~\eqref{formula:asymmetric} with $b^Ty> 0$, then $\chi(y)$ defines a feasible solution of~\eqref{formula:USP2_short_asymmetric}.
        \item The map $\psi(\cdot)$ preserves the approximation ratio.
        Namely, for any $\gamma\ge 1$, if $\pi$ is a solution of~\eqref{formula:USP2_short_asymmetric} within factor $\gamma$ of the optimum, i.e., $(W^{-1}A^T\pi)_{\max}\le \gamma \cdot (W^{-1}A^T\pi^*)_{\max}$,
		then $\psi(\pi)$ is feasible for~\eqref{formula:asymmetric} and within factor $\gamma$ of the optimum, i.e., $b^T\psi(\pi)\ge \gamma^{-1}b^Ty^*$.
    \end{enumerate}
\end{lemma}
\begin{proof}
\begin{enumerate}
   \item Let $\pi$ be feasible for~\eqref{formula:USP2_short_asymmetric}, i.e., $b^T\pi=1$, with $(W^{-1}A^T\pi)_{\max}> 0$. Then $(W^{-1}A^T\psi(\pi))_{\max}= (W^{-1} A^T\frac{\pi}{(W^{-1}A^T\pi)_{\max}})_{\max} =  1$ and thus $\psi(\pi)$ is feasible for the dual in~\eqref{formula:asymmetric}.	Now, let $y$ be feasible for the dual in~\eqref{formula:asymmetric}, i.e., $(W^{-1} A^Ty)_{\max}\le 1$, and let $y$ satisfy $b^Ty>0$. Then $b^T\chi(y)=b^T\frac{y}{b^Ty} = 1$ and thus $\chi(y)$ is feasible for~\eqref{formula:USP2_short_asymmetric}.
    \item Recall that $b\neq 0$ and $b^T\ones = 0$. As $\frac{b}{(W^{-1}A^Tb)_{\max}}$ is feasible for \eqref{formula:asymmetric} and $b^Tb>0$, we have that $b^Ty^*> 0$. Thus, $y^*\neq r\cdot \ones$ for all $r\in \RR$, i.e., there are $u,v\in V$ so that $y^*_u\neq y^*_v$. As $G$ is connected, this entails that $(W^{-1}A^Ty^*)_{\max}>0$. Hence, $\chi(y^*)$ is feasible for \eqref{formula:USP2_short_asymmetric} and has positive objective, i.e., $(W^{-1}A^T\chi(y^*))_{\max}=\frac{(W^{-1}A^Ty^*)_{\max}}{b^Ty^*}>0$; in particular, we have that $(W^{-1}A^T\pi^*)_{\max}>0$. Accordingly, if $\pi$ is a $\gamma$-approximation of \eqref{formula:USP2_short_asymmetric}, $(W^{-1}A^T\pi)_{\max}>0$ and
        \begin{align*}
			b^T\psi(\pi) &= \frac{b^T\pi}{(W^{-1}A^T\pi)_{\max}}\\
			&\ge \frac{1}{(W^{-1}A^T\pi)_{\max}}\\
			&\ge \frac{1}{\gamma (W^{-1}A^T\pi^*)_{\max}}\\
			&\ge \frac{1}{\gamma (W^{-1} A^T\chi(y^*))_{\max}}\\
			&= \frac{b^Ty^*}{\gamma (W^{-1} A^Ty^*)_{\max}}\\
			&\ge \frac{b^Ty^*}{\gamma}. \qedhere
		\end{align*}
\end{enumerate}
\end{proof}
In other words, it is sufficient to determine a $(1+\varepsilon)$-approximation to~\eqref{formula:USP2_short_asymmetric} in order to obtain a $(1+\varepsilon)$-approximation to~\eqref{formula:asymmetric}.

We have now translated the dual maximization problem of~\eqref{formula:asymmetric}, which has inequality constraints, into a minimization problem with a single equality constraint $b^T\pi = 1$. This would enable us to employ projected gradient descent in the $b^T\pi = 1$ plane, if it were not for the fact that the objective $(W^{-1} A^T\pi)_{\max}$ is not differentiable. To overcome this issue, we use the standard approach of ``smoothing'' the gradient by approximating the objective by a differentiable function. For the maximum value of a vector, a suitable candidate is given by the log-sum-exponent (or softmax) function. For vectors $z\in\RR^d$, it is defined as
\begin{equation*}
	\softmax[\beta]{z}:= \frac{1}{\beta} \ln\left(\sum_{i\in [d]}e^{\beta z_i}\right),
\end{equation*}
where the parameter $\beta > 0$ determines the trade-off between (a) accuracy of approximation and (b) ``smoothness.'' To clarify what we mean by (a), observe that
\begin{equation}\label{formula:lse_approx}
(z)_{\max}\leq \softmax[\beta]{z}\leq \frac{1}{\beta} \ln\left(\sum_{i\in [d]}e^{\beta (z)_{\max}}\right)= \frac{1}{\beta} \ln\left(de^{\beta (z)_{\max}}\right)=(z)_{\max}+\frac{\ln d}{\beta},
\end{equation}
because both $\ln (\cdot) $ and $e^{(\cdot)}$ are increasing functions. To precise what we mean by (b), denote for $z\in \RR^d$ by $\|z\|_1=\sum_{i=1}^d |z_i|$ its $1$-norm and by $\|z\|_{\infty}=\max\{|z_i|:i\in [d]\}$ its $\infty$-norm. Then the $1$-norm of the gradient $\nabla \softmax[\beta]{\cdot}$ is $\beta$-Lipschitz continuous w.r.t.\ the $\infty$-norm (see, e.g., \cite{Sherman13}), i.e.,
\begin{equation}\label{formula:lipschitz}
\forall z,z'\in \RR^d\colon \|\nabla\softmax[\beta]{z}-\nabla\softmax[\beta]{z'}\|_1\leq \beta \|z-z'\|_{\infty}.
\end{equation}

Recall that our goal is to find $\pi\in \RR^n$ that is a $(1+\varepsilon)$-approximate feasible solution to~\eqref{formula:USP2_short_asymmetric}. Accordingly, we define the \emph{potential function}
\begin{equation*}
	\potfun_\beta(\pi):=\softmax[\beta]{W^{-1} A^T\pi}.
\end{equation*}
Note that $\potfun_\beta(\cdot)$ is convex for any $\beta$, as it is constructed by composing $\softmax[\beta]{\cdot}$, which is convex for any $\beta$, with linear functions. We will vary $\beta$ over the course of the algorithm to balance the requirement of a sufficiently accurate approximation of $(W^{-1}A^T\pi)_{\max}$ with the need for small $\beta$, i.e., a smooth gradient and thus fast progress.

Concerning the approximation guarantee, our target of a $(1+\varepsilon)$-approximation entails that the potential must be a more accurate approximation to the objective. Accordingly, we will ensure that
\begin{equation}\label{formula:potfun_approx}
\potfun_\beta(\pi)\leq \left(1+\frac{\varepsilon}{4}\right)(W^{-1} A^T\pi)_{\max}.
\end{equation}
By \eqref{formula:lse_approx}, $4\ln(2m)\leq \varepsilon \beta (W^{-1}A^T\pi)_{\max}$ implies \eqref{formula:potfun_approx}, which is an invariant the algorithm will maintain.

To understand the effect of $\beta$ on the progress of the gradient descent, we examine the gradient of the potential function. As multiplications with $W^{-1}$ and $A^T$ are linear functions,
\begin{equation}\label{formula:grad_potfun}
\nabla\potfun_\beta(\pi) = AW^{-1}\nabla\softmax[\beta]{W^{-1}A^T\pi}.
\end{equation}
Recall that $W_-=\diag(w^-,w^-)$. From \eqref{formula:lipschitz} and H\"older's inequality,\footnote{It holds that $|z^Tz'|\leq \|z\|_p \cdot \|z'\|_q$ for any $p,q\geq 1$ satisfying $p^{-1}+q^{-1}=1$ (where $\infty^{-1}=0$).} we can derive a key property of $\nabla\potfun_\beta(\cdot)$ that we will use for bounding the change of the potential for a given update step $h\in \RR^n$ that is due to the change of the gradient, namely
\begin{align}\label{formula:grad_potfun_smooth}
\forall \pi,h\in \RR^n\colon &\left|\nabla\potfun_\beta(\pi)^Th-\nabla\potfun_\beta(\pi-h)^Th\right|\nonumber\\
\stackrel{\eqref{formula:grad_potfun}}{=}&\left|AW^{-1}\nabla\softmax[\beta]{W^{-1}A^T\pi}^T h-AW^{-1}\nabla\softmax[\beta]{W^{-1}A^T(\pi-h)}^Th\right|\nonumber\\
=&\left|\left(\nabla\softmax[\beta]{W^{-1}A^T\pi}-\nabla\softmax[\beta]{W^{-1}A^T(\pi-h)} \right)^TW^{-1}A^T h\right|\nonumber\\
\stackrel{\mbox{\scriptsize{H\"older}}}{\leq}&\left\|\nabla\softmax[\beta]{W^{-1}A^T\pi}-\nabla\softmax[\beta]{W^{-1}A^T(\pi-h)}\right\|_1\left\|W^{-1}A^Th\right\|_{\infty}\nonumber\\
\stackrel{\eqref{formula:lipschitz}}{\leq}&\;\beta\|W^{-1}A^Th\|_{\infty}^2\nonumber\\
= &\hspace*{.1cm}\beta \|W_-^{-1}A^Th\|_{\infty}^2\nonumber\\
= &\hspace*{.1cm}\beta (W_-^{-1}A^Th)_{\max}^2\,,
\end{align}
where the last two steps use that $(A^Th)_{uv}=-(A^Th)_{vu}$ for all $u,v\in V$ and $w^+_e\geq w^-_e$ for all $e\in \bar{E}$. Intuitively, we decomposed the gradient into the $\beta$-Lipschitz continuous part given by the $\softmax[\beta]{\cdot}$ function and the derivative $AW^{-1}$ of the ``inner'' function. This means that the ``length'' of an update step $h$ will be measured in terms of $(W_-^{-1}A^Th)_{\max}$.
Using this bound, for a given step direction $h$ we can now determine the step size that maximizes the progress in direction of $h$ as a function of $\nabla\potfun_\beta(\pi)$ (under the worst-case assumption that \eqref{formula:grad_potfun_smooth} is tight).
\begin{lemma}[Additive Decrement of $\potfun_{\beta}$]\label{lemma:decrease}
		Suppose $\pi,h \in\RR^n$ satisfy $\nabla\potfun_\beta(\pi)^Th>0$ and $(W_-^{-1}A^Th)_{\max}\leq 1$.
	Then, for $\delta := \nabla\potfun_\beta(\pi)^T h$ it holds that
	\[
		\potfun_\beta\Big(\pi-\frac{\delta h}{2\beta}\Big) \le \potfun_\beta(\pi)-\frac{\delta^2}{4\beta}.
	\]
\end{lemma}
\begin{proof}
	Let us denote $\h:=\tfrac{\delta h}{2\beta}$. Recall that $\potfun_\beta(\cdot)$ is convex and thus $ \potfun_\beta(\pi) \geq \potfun_\beta(\pi-\h) + \nabla \potfun_\beta\big(\pi-\h\big)^T\h $. This gives
	\begin{align*}
		\potfun_\beta(\pi-\h) - \potfun_\beta(\pi)&\leq
		-\nabla \potfun_\beta\big(\pi-\h\big)^T\h
		+ \nabla \potfun_\beta(\pi)^T\h
		- \nabla \potfun_\beta(\pi)^T\h\\
		&\stackrel{\eqref{formula:grad_potfun_smooth}}{\leq}
		\beta (W_-^{-1}A^T\h)_{\max}^2- \nabla\potfun_\beta(\pi)^T \h\\
		&= \frac{\delta^2 (W_-^{-1}A^Th)_{\max}^2}{4\beta} - \frac{\delta^2}{2\beta}\\
		&\leq \frac{\delta^2}{4\beta}-\frac{\delta^2}{2\beta} =-\frac{\delta^2}{4\beta}. \qedhere
	\end{align*}
\end{proof}

As this lemma suggests, we will try to ensure large progress by making $ \delta = \nabla\potfun_\beta(\pi)^T h $ as large as possible and $ \beta $ as small as possible.
Now observe that, for a fixed~$ \beta $, maximizing $ \nabla\potfun_\beta(\pi)^T h $ under the constraint $(W_-^{-1}A^Th)_{\max}\leq 1$ is another instance of the transshipment problem with demand vector $ \nabla\potfun_\beta(\pi) $.
Our algorithm will determine its step direction~$ h $ by computing an $ \alpha $-approximation to this transshipment instance.
Regarding the incentive of minimizing $ \beta $, note that progress with respect to $\potfun_\beta(\cdot)$ is meaningless if it does not provide a sufficiently accurate approximation of the true objective $(W_-^{-1}A^T(\cdot))_{\max}$, so we will increase $\beta$ only when it becomes necessary to ensure \eqref{formula:potfun_approx}. Bounding $\beta$ from above turns the additive progress guarantee into a relative one.
\begin{corollary}[Multiplicative Decrement of $\potfun_{\beta}$]\label{coro:decrease}
Suppose $\pi,h \in\RR^n$ satisfy $\nabla\potfun_\beta(\pi)^Th>0$, $(W_-^{-1}A^Th)_{\max}\leq 1$, and $\varepsilon\beta\potfun_{\beta}(\pi)\leq 10 \ln(2m)$. Then, for $\delta := \nabla\potfun_\beta(\pi)^T h$ it holds that
	\[
		\potfun_\beta\Big(\pi-\frac{\delta h}{2\beta}\Big) \le \left(1-\frac{\varepsilon\delta^2}{40\ln(2m)}\right)\potfun_\beta(\pi).
	\]
\end{corollary}

\subsection{Generic Algorithm}\label{sec:algorithm}
We now present the pseudocode of the generic gradient descent algorithm for finding a $(1+\varepsilon)$-approximate solution to \eqref{formula:asymmetric}. As mentioned before, the algorithm actually computes a $(1+\varepsilon)$-approximate solution $\pi$ to \eqref{formula:USP2_short_asymmetric} and then returns $\psi(\pi)$, cf.~Lemma~\ref{lemma:trafo}. The algorithm is generic, as the performed computations need to be implemented in model-specific ways. This is discussed in Section~\ref{sec:apps}.

\begin{algorithm2e}[ht!]
	compute $(\alpha\lambda)$-approximation $\pi$ to \eqref{formula:USP2_short_asymmetric}\tcp*{use oracle, Obs.~\ref{obs:symmetric}, and Lemma~\ref{lemma:trafo}} \nllabel{line:initial oracle call}
	set $\varepsilon':=1$\;
	\Repeat{$\varepsilon'\leq \varepsilon$}{
	  set $\varepsilon'\leftarrow \frac{\varepsilon'}{2}$\;
	  set $\beta:=\tfrac{8\ln(2m)}{\varepsilon' (W^{-1}A^T\pi)_{\max}}$\nllabel{line:set_beta}\;
	  \Repeat{$\delta \le \tfrac{\varepsilon'}{6\alpha\lambda}$\tcp*[f]{current $\pi$ is a $(1+\varepsilon')$-approximate solution}}{
		compute gradient $\nabla\potfun_\beta(\pi)$\;\nllabel{line:compute gradient}
		compute $\alpha$-approximate primal-dual pair $ ((f, z), h) $ for $ \min \{ \ones^TW_-\ft : A\ft + zb = \nabla\potfun_{\beta}(\pi), \ft\ge 0 \} = \max \left\{\nabla\potfun_\beta(\pi)^T \h : (W_-^{-1}A^T\h)_{\max} \le 1 \wedge b^T\h = 0\right\}$\;\nllabel{line:oracle call in iteration}
        \tcp*[l]{use oracle with weights $w^-$ and demands $\bt = \nabla\potfun_\beta(\pi)$ (Cor.~\ref{coro:spanner})}
		set $\delta:= \nabla\potfun_\beta(\pi)^T h $\;
		\lIf{$\delta> \tfrac{\varepsilon'}{6\alpha\lambda}$}{
			$\pi\leftarrow \pi - \tfrac{\delta h}{2\beta}$\nllabel{line:update}
		}
        \lIf{$\beta<\tfrac{4\ln(2m)}{\varepsilon' (W^{-1}A^T\pi)_{\max}}$}{$\beta:=\tfrac{8\ln(2m)}{\varepsilon' (W^{-1}A^T\pi)_{\max}}$\nllabel{line:double_beta}\\\tcp*[f]{happens only if $\varepsilon'=\tfrac{1}{2}$}}
	  }
	}
	$ \xt := W^{-1}\nabla \softmax[\beta]{W^{-1}A^T\pi} $\;\nllabel{line:primal solution xt}
	$ \binom{f^+}{f^-} := f $ with $f^+,f^-\in \RR^m$\;\nllabel{line:primal decompose f}
	$ f' :=\binom{f^-}{f^+}$\;
	$ x:=\frac{\xt + f'}{z} $ \tcp*{Observation~\ref{obs:primal}}\nllabel{line:primal solution x}
	$ y:=\psi(\pi) $\tcp*{Lemma~\ref{lemma:trafo}}
	\Return $ (x, y) $\;
	\caption{\nablaalg$(G, b, \varepsilon)$}
	\label{alg:gradient}
\end{algorithm2e}

The code is given in Algorithm~\ref{alg:gradient}. The algorithm first computes a starting solution. This can be done by calling the oracle on the symmetrized problem \eqref{formula:symmetric}, which by Observation~\ref{obs:symmetric} yields an $(\alpha\lambda)$-approximation to \eqref{formula:asymmetric}, and then rescaling according to Lemma~\ref{lemma:trafo}. As trying to enforce a too precise approximation initially slows down progress, the algorithm initializes $\varepsilon'$ as a constant (the choice of $1$ is, neglecting technicalities, arbitrary). Then it starts an outer loop decreasing $\varepsilon'$ exponentially. It sets~$\beta$ to a suitable value and then starts the inner loop, which performs gradient descent steps until the step size becomes too small to ensure fast progress, i.e., until $\delta\leq \frac{\varepsilon'}{6\alpha \lambda^2}$. As in the first iteration of the outer loop, the potential may decrease rapidly (we go from an $(\alpha \lambda)$-approximation to a $\tfrac{1}{2}$-approximation), Line~\ref{line:double_beta} adjusts $\beta$ if necessary. As we will show, this implies that at termination of the inner loop, the current solution is a $(1+\varepsilon')$-approximation.

Each gradient descent step consists of the following steps:
\begin{enumerate}
  \item Compute $\nabla \potfun_{\beta}(\pi)$.
  \item Compute an $\alpha$-approximate solution $h$ to
  \begin{equation*}
  \max \left\{\nabla \potfun_{\beta}(\pi)^T \h : (W_-^{-1}A^T\h)_{\max} \le 1\wedge b^T\h = 0\right\},
  \end{equation*}
  i.e., of \eqref{formula:symmetric_constrained} with demands $\nabla \potfun_{\beta}(\pi)$.\footnote{As $ \nabla \potfun_{\beta}(\pi)^T \ones = 0 $, this instance is feasible.} This is done by calling the oracle (for the constrained problem). This optimization problem maximizes the rate of progress ``in units of $(W_-^{-1}A^T\h)_{\max}$,'' which in turn determines the step size, see Corollary~\ref{coro:decrease}.
  \item Determine the step size in accordance with Corollary~\ref{coro:decrease} and check whether $\delta$ is small enough to prove that the current $\pi$ is a $(1+\varepsilon')$-approximation. If yes, the inner loop terminates; if not, the step is executed.
  \item Increase $\beta$ if it becomes too small. (This will only happen when $ \epsilon' = \tfrac{1}{2} $).
\end{enumerate}
We note that the dual oracle problems occurring for different gradients of the potential function in the second step differ only in terms of the objective function. The constraints however are the same in all iterations. This allows us to use the same spanner in all of these iterations as, in particular, the edge weights involved are always the backward weights $w^-$ of the input graph.
\paragraph*{Returning a primal solution}
As specified so far, the algorithm relies on a dual oracle only and also returns only a dual solution. For a primal solution, we require the oracle to provide an $\alpha$-approximate primal-dual pair to the constrained problem~\eqref{formula:symmetric_constrained} (with demands $\nabla \potfun_{\beta}(\pi)$).\footnote{Actually, the primal solution is exclusively required on termination to generate a primal solution to \eqref{formula:asymmetric}.}
Consider the final iteration of Algorithm~\ref{alg:gradient}.
Denote by~$ \pi $ the final normalized node potentials, set $ \xt = W^{-1}\nabla \softmax[\beta]{W^{-1}A^T\pi} $ (for the final value of $ \beta $) and $ y = \psi(\pi) $.
We will show how to compute $ x $ such that $ (x, y) $ is a $ (1 + \epsilon) $-approximate primal-dual pair.

\begin{observation}\label{obs:xt}
The vector $\xt=W^{-1}\nabla \softmax[\beta]{W^{-1}A^T\pi}$ satisfies \normalfont{(a)} $\xt\ge 0$, \normalfont{(b)} $ A\xt = \nabla\potfun_{\beta}(\pi) $, and \normalfont{(c)} $ \ones^TW\xt= 1 $.
\end{observation}
\begin{proof}
By~\eqref{formula:grad_potfun}, $A\xt = \nabla\potfun_{\beta}(\pi)$ and thus (b) holds. For the other two properties, we set $z=W^{-1}A^T\pi$ and compute
\begin{equation*}
    \forall i\in \{1,\ldots,2m\}\colon \xt_i = \frac{1}{w_i} \nabla \softmax[\beta]{z}_i=\frac{1}{\beta w_i}\frac{\beta e^{\beta z_i}}{\sum_{j=1}^{2m} e^{\beta z_j}}>0,
\end{equation*}
hence (a) holds, and
\begin{equation*}
\ones^TW\xt = \ones^T \nabla \softmax[\beta]{z}=\frac{1}{\beta}\sum_{i=1}^{2m} \frac{\beta e^{\beta z_i}}{\sum_{j=1}^{2m} e^{\beta z_j}}=1,
\end{equation*}
showing (c).
\end{proof}
Conveniently, this $\xt$ is a byproduct of evaluating $\nabla \potfun_{\beta}(\pi)$ in the final iteration of the algorithm, by storing the intermediate result before the final multiplication with $A$, cf.~\eqref{formula:grad_potfun}.

Recall that in its final iteration, the algorithm computes an $\alpha$-approximate primal-dual pair $ ((f, z), h) $ for \eqref{formula:symmetric_constrained} with demands $\bt = \nabla\potfun_{\beta}(\pi)$, i.e., the program
\begin{align*}
	&\min \{ \ones^TW_-\ft : A\ft + zb = \nabla\potfun_{\beta}(\pi), \ft\ge 0 \}\\
	=\; & \max\{\nabla\potfun_{\beta}(\pi)^T\h: (W_-^{-1} A^T\h)_{\max} \le 1 \wedge b^Ty=0\}
\end{align*}
(where we use $\ft$ and $\h$ to denote the variables for the sake of distinction from \eqref{formula:asymmetric}). We can use $\xt$ as defined above to cancel out the demands in the primal program and then rescale according to $z$.
\begin{observation}\label{obs:primal}
For any given $\beta\in \RR_{>0}$, $\pi\in \RR^n$ with $b^T\pi=1$, primal solution $(f,z)\in \RR^{2m}\times \RR$ to~\eqref{formula:symmetric_constrained} with $\bt=\nabla\potfun_{\beta}(\pi)$ satisfying $z>0$, and $\xt\in \RR^{2m}$ satisfying~$ A\xt = \nabla\potfun_{\beta}(\pi) $, write $f=\binom{f^+}{f^-}$ (with $f^+,f^-\in \RR^m$). Then a primal solution to \eqref{formula:asymmetric} is given by
\begin{equation}\label{formula:primal}
x:=\frac{\xt + \binom{f^-}{f^+}}{z}\in \RR^{2m}.
\end{equation}
\end{observation}
\begin{proof}
As $\xt$, $f$, and $z$ are all positive, so is $x$. $Ax=b$ follows from observing that $A\binom{f^-}{f^+}=-Af=-\nabla\potfun_{\beta}(\pi)+zb$, as for each (undirected) edge we have a forward and a backward arc.
\end{proof}
The intuition regarding why this should be a ``good'' primal solution to \eqref{formula:asymmetric} is as follows. Any $\pi'$ minimizing $\potfun_{\beta}(\cdot)$ must satisfy that $\nabla\potfun_{\beta}(\pi')^T h=0$ for any $h$ with $b^T h = 0$. Thus, $\nabla\potfun_{\beta}(\pi')$ is proportional to $b$. For a near-optimal solution $\pi$ we have that $\nabla\potfun_{\beta}(\pi)$ is \emph{almost} parallel to $b$, as the gradient restricted to the $b^Th = 0$ plane is small. Accordingly, the ``cost'' of the correction, given by $\ones^TWf$ (which relates to $\ones^TW_-f$), is small. The role of the scaling factor $z$ is to undo the normalization of \eqref{formula:asymmetric} we performed by transitioning to problem \eqref{formula:USP2_short_asymmetric}.

We remark that it also possible to drop the constraint $b^T h = 0$ in the update problem at the expense of an increase in the approximation ratio. To this end, we consider the projector $P := I - \pi b^T$, which yields $b^T P h = 0$ for any $h \in \RR^n$. A primal-dual pair $f',h'$ of
\[
\max \{ \bt^T P h : (W_-^{-1} A^T h)_{\max} \le 1 \} = \min \{ \ones^T W_- f : A f = P^T \bt,\, f \ge 0 \}
\]
with $\ones^T W_- f' \le \alpha' \cdot \bt^T P h'$ yields a primal-dual pair $(f,z), h$ for the more restricted update problem with $f = f'$, $z = \bt^T \pi$, and $h = P h' / (W_-^{-1} A^T P h')_{\max}$. Moreover,
\[
 \ones^T W_- f
 = \ones^T W_- f'
 \le \alpha' \cdot \bt^T P h'
 = \alpha' \cdot (W_-^{-1} A^T P h')_{\max} \cdot \bt^T h.
\]
Furthermore,
\begin{align*}
 (W_-^{-1} A^T P h)_{\max} &\le 1 + |b^T h| \cdot (W_-^{-1} A^T \pi)_{\max} \le 1 + b^T y^* \cdot \lambda \cdot (W^{-1} A^T \pi)_{\max} \\ &= 1 + \lambda \frac{(W^{-1} A^T \pi)_{\max}}{(W^{-1} A^T \pi^*)_{\max}} \le 1 + \lambda^2 \alpha'
 \end{align*}
 since we start with an $\alpha'\lambda$-approximation.\footnote{Note that $(W^{-1} A^T \pi)_{\max}$ is always within a constant factor of $\potfun_\beta(\pi)$, which decreases in every iteration.} Thus, we obtain an $\alpha$-approximation of the restricted problem with $\alpha \in O(\alpha'^2\lambda^2)$. However, in the models of computation considered in this paper, solving the restricted problem is not harder than the unrestricted one, so we do not need to pay the additional computational cost resulting from the weaker approximationa guarantee.

\paragraph*{Proof of correctness}
We show the approximation guarantee of the algorithm by arguing that its return value $ (x, y) $ is a $(1+\varepsilon)$-approximate pair, i.e., $\ones^TWx\leq (1+\varepsilon)b^Ty$. By weak duality this in particular implies that both $x$ and $y$ are ${(1+\varepsilon)}$-approximate solutions. We stress, however, that Algorithm~\ref{alg:gradient} would not have to compute $ x $ to return the $(1+\varepsilon)$-approximate dual solution $ y $; we could exploit the mere existence of~$ x $ for proving the approximation guarantee of $ y $. To obtain $ y $ it would thus be sufficient for the oracle to return only an $ \alpha $-approximate dual solution $ h $ for~\eqref{formula:symmetric_constrained}. If the oracle additonally returns an $ \alpha $-approximate dual solution $ (f, z) $, then $ x $ can be determined explicitly. Note that, by Corollary~\ref{coro:spanner}, the oracle we use throughout this work has this capability.

We first prove a helper statement showing that the algorithm maintains suitable values of $\beta$, and how adjusting $\beta$ affects $\potfun_{\beta}(\pi)$.
\begin{lemma}\label{lemma:helper}
Except immediately after updates of $\varepsilon'$ or $\pi$ (i.e., before adjusting~$\beta$ in the subsequent lines), Algorithm \emph{\nablaalg} maintains the invariant that
\begin{equation*}
4\ln(2m)\leq \varepsilon'\beta (W^{-1}A^T\pi)_{\max}\leq \varepsilon'\beta \potfun_{\beta}(\pi)\leq 10\ln(2m).
\end{equation*}
In particular, $\potfun_{\beta}(\pi)\leq (1+\tfrac{\varepsilon'}{4})(W^{-1}A^T\pi)_{\max}$.
\end{lemma}
\begin{proof}
First, observe that $4\ln(2m)\leq \varepsilon'\beta (W^{-1}A^T\pi)_{\max}$ follows immediately from Lines~\ref{line:set_beta} and~\ref{line:double_beta}. By~\eqref{formula:lse_approx}, this implies \eqref{formula:potfun_approx}, i.e., the last statement of the lemma. As by~\eqref{formula:lse_approx} $(W^{-1}A^T\pi)_{\max} \leq \softmax[\beta]{W^{-1} A^T\pi} = \potfun_{\beta}(\pi)$ for any $\beta$, it remains to show that $\varepsilon'\beta \potfun_{\beta}(\pi)\leq 10\ln(2m)$. Note that $\varepsilon'\beta (W^{-1}A^T\pi)_{\max}\leq 8\ln(2m)$ implies that
\begin{equation*}
\varepsilon'\beta \potfun_{\beta}(\pi)\leq \varepsilon'\beta\left(1+\frac{\varepsilon'}{4}\right)(W^{-1}A^T\pi)_{\max}\stackrel{\varepsilon'\leq 1}{\leq}10\ln(2m).
\end{equation*}
Thus, the statement always holds after executing Line~\ref{line:set_beta} or adjusting $\beta$ according to Line~\ref{line:double_beta}. This leaves only the possibility that updates of $\pi$ cause a violation of the inequality $ \varepsilon'\beta \potfun_{\beta}(\pi)\leq 10\ln(2m) $ while $\beta$ is not adjusted. However, Lemma~\ref{lemma:decrease} shows that updates of $\pi$ may only decrease the potential, completing the proof.
\end{proof}

\begin{lemma}[Correctness of Algorithm~\ref{alg:gradient}]\label{lemma:correctness}
	Let $0<\varepsilon\leq 1$, and let $ (x, y) $ be the return value of $ \nablaalg (G, b, \epsilon) $.
	Then $(x, y)$ is a $(1+\varepsilon)$-approximate pair for \eqref{formula:asymmetric}, i.e., it holds that
	$Ax=b$, $x\ge 0$, $\qmaxstretch{y}\le 1$, and $\pobjvalue{x}\le (1+\varepsilon)b^Ty$.
\end{lemma}
\begin{proof}
Let $ \beta $, $ \pi $, $ f $, $ z $, and $ h $ denote the values of the respective variables upon termination of the algorithm.
Remember that $ ((f, z), h) $ is an $\alpha$-approximate pair for~\eqref{formula:symmetric_constrained} with $\bt=\nabla\potfun_{\beta}(\pi)$.
Due to termination of the algorithm, we have that $\delta \leq \frac{\varepsilon'}{6\alpha \lambda}$ and $\varepsilon'\leq\varepsilon$, yielding that
\begin{equation}\label{formula:progress_termination}
\nabla\potfun_{\beta}(\pi)^T h = \delta \leq \frac{\varepsilon'}{6\alpha \lambda}\leq \frac{\varepsilon}{6\alpha}.
\end{equation}
Recall that $\|W^{-1}_-A^T\pi\|_{\infty}=(W^{-1}_-A^T\pi)_{\max}$ and note that, as $f\geq 0$, $\|W_-f\|_1=\ones^TW_-f$.
By Lemma~\ref{lemma:helper} we have $ 4\ln(2m)\leq \varepsilon'\beta (W^{-1}A^T\pi)_{\max}\leq \varepsilon'\beta \leq \varepsilon\beta $.
As $\potfun_\beta(\cdot)$ is convex, its gradient restricted to a line is non-decreasing. Thus, the first-order Maclaurin approximation gives
\begin{equation}\label{formula:potfun_grad_lower}
\pi^T\nabla\potfun_{\beta}(\pi)\geq \potfun_{\beta}(\pi)-\potfun_{\beta}(0)=\potfun_{\beta}(\pi)-\frac{\ln(2m)}{\beta} \geq \left(1-\frac{\varepsilon}{4}\right)\potfun_{\beta}(\pi).
\end{equation}

By multiplying the set of equalities $Af+zb=\nabla \potfun_{\beta}(\pi)$ with $\pi^T$ from the left, we see that
\begin{align}
z&=\pi^T b z\nonumber\\
& = \pi^T \nabla \potfun_{\beta}(\pi)-\pi^TAf\nonumber\\
& \stackrel{\eqref{formula:potfun_grad_lower}}{\geq} \left(1-\frac{\varepsilon}{4}\right)\potfun_\beta(\pi) - \pi^TAW_-^{-1}W_-f\nonumber\\
& \stackrel{\eqref{formula:lse_approx}}{\geq} \left(1-\frac{\varepsilon}{4}\right)(W^{-1}_-A^T\pi)_{\max} - (W^{-1}_-A^T\pi)^TW_-f\nonumber\\
& \stackrel{\mbox{\scriptsize{H\"older}}}{\geq} \left(1-\frac{\varepsilon}{4}\right)(W^{-1}_-A^T\pi)_{\max} - \|W^{-1}_-A^T\pi\|_{\infty}\|W_-f\|_1\nonumber\\
& = \left(1-\frac{\varepsilon}{4}-\ones^TW_-f\right)(W^{-1}_-A^T\pi)_{\max}\nonumber\\
& \stackrel{\mbox{\scriptsize{$\alpha$-approx.\;pair}}}{\geq} \left(1-\frac{\varepsilon}{4}-\alpha \nabla\potfun_{\beta}(\pi)^T h\right)(W^{-1}_-A^T\pi)_{\max}\nonumber\\
& \stackrel{\eqref{formula:progress_termination}}{\geq} \left(1-\frac{5\varepsilon}{12}\right)(W^{-1}_-A^T\pi)_{\max}.\label{formula:z}
\end{align}
In particular, $z>0$ and, by Observation~\ref{obs:primal} and Lemma~\ref{lemma:trafo}, $x$ and $y = \psi(\pi)$ are indeed feasible primal and dual solutions of \eqref{formula:asymmetric}, respectively.

It remains to show that $\pobjvalue{x}\le (1+\varepsilon)b^Ty$. As $\ones^T W\xt = 1$ by Observation~\ref{obs:xt},
\begin{align*}
\ones^TWx &= \frac{\ones^TW\xt + \ones^T W\binom{f^-}{f^+}}{z}\\
&\leq \frac{1+\lambda\ones^TW_-\binom{f^-}{f^+}}{z}\\
&=\frac{1+\lambda\ones^TW_-f}{z}\\
&\stackrel{\mbox{\scriptsize{$\alpha$-approx.\;pair}}}{\leq} \frac{1+\lambda\alpha \nabla\potfun_{\beta}(\pi)^Th}{z}\\
&\stackrel{\eqref{formula:progress_termination}}{\leq} \frac{1+\tfrac{\varepsilon}{6}}{z}\\
&\stackrel{\eqref{formula:z}}{\leq}\frac{1+\tfrac{\varepsilon}{6}}{\left(1-\tfrac{5\varepsilon}{12}\right)(W^{-1}_-A^T\pi)_{\max}}\\
&\stackrel{\varepsilon\leq 1}{\leq}\frac{1+\varepsilon}{(W^{-1}_-A^T\pi)_{\max}}\\
&\stackrel{\mbox{\scriptsize{Obs.~\ref{obs:symmetric}}}}{\leq}\frac{1+\varepsilon}{(W^{-1}A^T\pi)_{\max}}\\
&=(1+\varepsilon)b^T\psi(\pi) \\
&=(1+\varepsilon)b^Ty. \qedhere
\end{align*}
\end{proof}
\begin{corollary}\label{coro:correctness}
Suppose $0<\varepsilon\leq 1$. Whenever the inner loop of Algorithm \emph{\nablaalg} terminates, the current $\pi$ is a $(1+\varepsilon')$-approximate dual solution to \eqref{formula:asymmetric}.
\end{corollary}
\begin{proof}
If we ran the algorithm for $\varepsilon=\varepsilon'$, it would perform exactly the same computations up to that point and then terminate. Noting that an optimal primal solution $(f,z)$ for the problem posed to the oracle in the last iteration and, by Observation~\ref{obs:primal}, $\xt=W^{-1}\nabla \softmax[\beta]{W^{-1}A^T\pi}$ satisfies the preconditions of Lemma~\ref{lemma:correctness}, the claim is shown by the lemma.
\end{proof}

\paragraph*{Bounding the number of iterations}
We now examine how many iterations Algorithm~\ref{alg:gradient} requires to terminate. This reduces the task of bounding the overall running time to determining the cost of implementing a single iteration, which depends on the specific model of computation.

\begin{lemma}[Number of iterations of Algorithm~\ref{alg:gradient}]\label{lemma:iterations_asym}
Suppose that $0<\varepsilon\leq 1$. Then Algorithm \emph{\nablaalg} performs $O((\varepsilon^{-2}+\log \alpha+ \log \lambda)\alpha^2\lambda^2 \log n)$ iterations of its inner loop.
\end{lemma}
\begin{proof}
Denote by $\pi^{(i)}$ and $\beta_i$, $0\leq i\leq i_{\max}:=\lceil \log \varepsilon^{-1}\rceil$, the values of $\pi$ and $\beta$ at the beginning of the $(i+1)$-th iteration of the outer loop, where $\pi^{(i_{\max})}$ and $\beta_{i_{\max}}$ denote the values at termination. Refer to the $i$-th iteration as \emph{phase $i$} and denote by $\varepsilon_i=2^{-i}$ the value of $\varepsilon'$ during this loop iteration.

Lemma~\ref{lemma:helper} shows that the preconditions of Corollary~\ref{coro:decrease} are satisfied by each update of $\pi$, as $\delta\leq \tfrac{\varepsilon_i}{6\lambda \alpha}$ implies termination of the inner loop. Thus, if the inner loop does not terminate, the corollary shows that the potential decreases by a factor of at least
\begin{equation*}
\left(1-\frac{\varepsilon_i\delta^2}{40\ln(2m)}\right)\leq \left(1-\frac{\varepsilon_i^3}{1440\alpha^2\lambda^2\ln(2m)}\right)=:q_i.
\end{equation*}
However, when setting $\beta$ in Line~\ref{line:double_beta}, the potential might increase. Recall that, for any $\beta$, it holds that $\potfun_{\beta}(\pi)\geq (W^{-1}A^T\pi)_{\max}$ by~\eqref{formula:lse_approx}. As after Line~\ref{line:double_beta} we have that $\potfun_{\beta}(\pi)\leq (1+\tfrac{\varepsilon_i}{4})(W^{-1}A^T\pi)_{\max}$ again by Lemma~\ref{lemma:helper}, the increase in potential due to an update of $\beta$ according to Line~\ref{line:double_beta} is multiplicatively bounded by $1+\tfrac{\varepsilon_i}{4}$.
Observe that $\beta$ is at least doubled by the statement in Line~\ref{line:double_beta} and that this can happen no more than
\begin{equation*}
\ell_i:=\log\left(\frac{(W^{-1}A^T\pi^{(i-1)})_{\max}}{(W^{-1}A^T\pi^{(i)})_{\max}}\right)\leq \log\left(\frac{(W^{-1}A^T\pi^{(i-1)})_{\max}}{(W^{-1}A^T\pi^*)_{\max}}\right)
\end{equation*}
times in phase $i$. By Corollary~\ref{coro:correctness}, we have for all $i>1$ that
\begin{equation*}
(W^{-1}A^T\pi^{(i-1)})_{\max}\leq (1+\varepsilon_{i-1})(W^{-1}A^T\pi^*)_{\max}<2(W^{-1}A^T\pi^*)_{\max}
\end{equation*}
and thus $\ell_i=0$, and as $\pi^{(0)}$ is an $(\alpha\lambda)$-approximation, we have that $\ell_1\leq \log (\alpha\lambda)$.

Denote by $\beta_i'$ the value of $\beta$ when the inner loop terminates at the end of phase~$i$ and by $k_i$ the number of iterations in this phase. Using the above bounds on $\ell_i$, that $\potfun_{\beta_{i-1}}(\pi^{(i-1)})\leq (1+\tfrac{\varepsilon_i}{4})(W^{-1}A^T\pi^{(i-1)})_{\max}$ by Lemma~\ref{lemma:helper}, and once more the approximation guarantee of the $\pi^{(i)}$, we can bound
\begin{align*}
(W^{-1}A^T\pi^*)_{\max}&\leq (W^{-1}A^T\pi^{(i)})_{\max}\\
&\leq \potfun_{\beta_i'}(\pi^{(i)})\\
&\leq q_i^{k_i}\left(1+\frac{\varepsilon_i}{4}\right)^{\ell_i}\potfun_{\beta_{i-1}}(\pi^{(i-1)})\\
&\leq q_i^{k_i}\left(1+\frac{\varepsilon_i}{4}\right)^{\ell_i+1}(W^{-1}A^T\pi^{(i-1)})_{\max}\\
&\leq q_i^{k_i}\left(1+\frac{\varepsilon_i}{4}\right)^{\ell_i+1} (W^{-1}A^T\pi^*)_{\max}\cdot
\begin{cases}
\alpha\lambda & \mbox{if }i=1\\
(1+\varepsilon_{i-1}) & \mbox{if }i>1
\end{cases}\\
&\leq q_i^{k_i} (W^{-1}A^T\pi^*)_{\max}\cdot
\begin{cases}
\left(1+\frac{\varepsilon_1}{4}\right)^{\ell_1+1}\alpha\lambda & \mbox{if }i=1\\
\left(1+\frac{\varepsilon_i}{4}\right)(1+\varepsilon_{i-1}) & \mbox{if }i>1
\end{cases}\\
&\stackrel{\varepsilon_i \leq \varepsilon_{i-1} \leq 1}{<}q_i^{k_i} (W^{-1}A^T\pi^*)_{\max}\cdot
\begin{cases}
\alpha^2\lambda^2 & \mbox{if }i=1\\
1+\frac{3\varepsilon_{i-1}}{2} & \mbox{if }i>1.
\end{cases}
\end{align*}
Dividing by $(W^{-1}A^T\pi^*)_{\max}$, taking the logarithm, and rearranging, we infer that
\begin{align*}
k_i&<\begin{cases}
\frac{\log(\alpha^2\lambda^2)}{-\log q_1} & \mbox{if }i=1\\[1ex]
\frac{\log\left(1+\frac{3\varepsilon_{i-1}}{2}\right)}{-\log q_i} & \mbox{if }i>1
\end{cases}\\
&\in \begin{cases}
O\left((\log\alpha+\log \lambda)\alpha^2\lambda^2\ln(2m)\right) & \mbox{if }i=1\\
O\left(\varepsilon_i^{-2}\alpha^2\lambda^2\ln(2m)\right) & \mbox{if }i>1.
\end{cases}
\end{align*}
Summing over all phases $i$, the total number of iterations is bounded by
\begin{align*}
\sum_{i=1}^{i_{\max}}k_i &\in O\left(\left(\log\alpha+\log \lambda+\sum_{i=2}^{i_{\max}}\varepsilon_i^{-2}\right)\alpha^2\lambda^2\ln(2m)\right)\\
&= O\left(\left(\varepsilon^{-2}+\log \alpha+ \log \lambda\right)\alpha^2\lambda^2\log(n)\right). \qedhere
\end{align*}
\end{proof}
The bound on the number of iterations of the algorithm readily translates to one on the specific operations the algorithm performs.
\begin{corollary}\label{coro:iterations_asym}
Algorithm \emph{\nablaalg} can be executed using a total of $O((\varepsilon^{-2}+\log \alpha+ \log \lambda)\alpha^2\lambda^2 \log n)$ operations of the following types:
\begin{itemize}
  \item oracle call,
  \item computation of $(W^{-1}A^T\pi)_{\max}$ for a given $\pi$,
  \item computation of $\nabla\potfun_{\beta}(\pi)$ for given $\beta$ and $\pi$,
  \item scalar product $\bt^T h$ for given $\bt$ and $h$,
  \item comparisons of and multiplications with scalar values.
\end{itemize}
\end{corollary}
\begin{proof}
Follows by checking the lines of the code, the statements referred to in the comments, and the bound on the number of iterations given in Lemma~\ref{lemma:iterations_asym}.
\end{proof}

We summarize the results of this section in the following theorem.
\begin{theorem}\label{theorem:asymmetric}
Given an oracle that computes $\alpha$-approximate dual solutions to~\eqref{formula:symmetric_constrained}, Algorithm \emph{\nablaalg} computes, for any $ 0 < \varepsilon \leq 1 $, a $(1+\varepsilon)$-approximate dual solution to the transshipment problem~\eqref{formula:asymmetric} on bidirected graphs with positive arc weights calling the oracle $O((\varepsilon^{-2}+\log \alpha + \log \lambda)\alpha^2\lambda^2\log n)$ times. If, on termination of the algorithm, the oracle provides an $\alpha$-approximate primal-dual pair of solutions to \eqref{formula:symmetric_constrained}, a $(1+\varepsilon)$-approximate primal-dual pair of solutions to \eqref{formula:asymmetric} can be readily determined (see Observation~\ref{obs:primal}).
\end{theorem}

\section{Single-Source Shortest Paths}\label{sec:sssp}

For any instance of the transshipment problem, there is always an optimal primal solution whose arcs with non-zero flow form a forest. This is a well-known result of the structure of linear programs~\cite[Theorem 11.1]{AhujaMO93}; for completeness, we directly prove the statement here for the problem at hand.
\begin{lemma}\label{lemma:tree}
\eqref{formula:asymmetric} has an optimal primal solution that sends flow only along the arcs of a forest.
\end{lemma}
\begin{proof}
Suppose $x\in \RR^{2m}$ is an optimal primal solution and let $C$ be any cycle so that for each edge $ e = \{u,v\}\in C$, $x_{(u,v)}>0$ or $x_{(v,u)}>0$. Consistently direct $C$. Let $f\in \RR^{2m}$ (not necessarily satisfying $f\geq 0$) send $\min\{x_{(u,v)}\,|\,\{u,v\}\in C \wedge x_{(u,v)}\neq 0\}$ units of flow in positive direction ``around'' $C$, so that $f_{(u,v)}\neq 0$ implies $x_{(u,v)}\neq 0$. By construction, both $x-f\geq 0$ and $x+f\geq 0$ and, as $C$ is a cycle, $Af=0$, i.e., $A(x-f)=A(x+f)=Ax=b$. Thus, both $x-f$ and $x+f$ are feasible.

Note that $x-f$ or $x+f$ satisfies that there is one arc less on $C$ carrying flow; assume w.l.o.g.\ it is $x-f$. We claim that $\ones^T W(x-f)=\ones^T Wx$. Assuming for contradiction that this is false, either $\ones^T W(x-f)<\ones^T W x$ or $\ones^T W (x+f)< \ones^T W x$, i.e., either $x-f$ or $x+f$ has smaller objective than~$x$, contradicting its optimality. We conclude that $x-f$ is also an optimal solution. The claim now follows by inductively repeating the argument until no cycles remain.
\end{proof}

When considering the special case of the single-source shortest path (SSSP) problem, two aspects of the solution given in Section~\ref{sec:transshipment} are unsatisfactory. First, if a primal solution is computed, there is no guarantee that it induces a tree, and thus it is not clear which arc one should traverse from $v$ when searching for a short path to $s$. Second, approximating the optimal solution guarantees only that the computed upper bounds on the distances from the source $s$ are a $(1+\varepsilon)$-approximation \emph{on average,} i.e., for any specific node $v$ it may still be the case that $y_v-y_s\gg d(s,v)$, where $d(\cdot,\cdot)$ is the distance metric on $V$ induced by $G$.

The goal of this section is to resolve these issues. Our solution is based on a simple sampling procedure using Algorithm~\nablaalg as a subroutine. It results in an approximate shortest-path tree rooted at $s$, which guarantees a $(1+\varepsilon)$-approximation for each distance from $s$ to a $v\in V$.

\subsection{Sampling a Tree Solution}\label{sec:tree}

As the first step towards the randomized solution, we construct a (primal) tree solution that is good on average.
We assume in the following that there is a single source node with negative demand and the demand on every non-source node is either $ 0 $ or $ 1 $.
The idea, which relies on our specific choice of a spanner-based oracle, is as follows:
\begin{enumerate}
  \item Run Algorithm \nablaalg for $\varepsilon':=\tfrac{\varepsilon}{6}$ until termination and denote by $x$ the returned primal solution.
  \item For each node $v\in V$, partition its incoming arcs $(u,v)$ with $x_{(u,v)}>0$ into classes in which arc weights differ by factor at most $2$ (we will ensure that there are only $O(\log n)$ classes). Denoting by $f_{(u,v)}$ the sum of flows of arcs in the class of $(u,v)$, sample $(u,v)$ with (roughly) probability $\min\{f_{(u,v)}^{-1}(2\lambda\alpha+1)x_{(u,v)},1\}$.
  \item We show that an optimal solution using only sampled arcs and spanner edges is a $(1+\varepsilon)$-approximation with probability at least~$\tfrac{2}{3}$, and that we can bound the number of arcs sampled by the procedure by $O(\alpha\lambda n \log n)$ with probability at least~$\tfrac{2}{3}$.
  \item We abort the procedure if more arcs are selected. Otherwise, we compute and return an optimum tree solution on the selected arc set. By the union bound, we thus obtain a $(1+\varepsilon)$-approximation using only the spanner edges and $O(\alpha\lambda n \log n)$ sampled arcs with probability at least~$\tfrac{1}{3}$.
  \item To obtain a Las Vegas algorithm, we simply repeat the procedure until a good solution is obtained; this can be checked by comparing the objective value of the obtained tree solution to the one of the dual solution returned by Algorithm \nablaalg.
\end{enumerate}
The second last step exploits that in the computational models considered in this work, operations on sparse graphs are considerably cheaper; thus, this strategy is tied to using an oracle based on a sparse graph.

In order to prove the correctness of the above procedure, conceptually decompose $x=:x^\vartriangle+x^{\circ}$, where the arcs with non-zero flow in $x^\vartriangle$ form a directed acyclic graph (DAG). Such a decomposition is always feasible, which again is a standard property whose proof we give for completeness.
\begin{lemma}\label{lemma:decompose}
Any primal solution to \eqref{formula:asymmetric} can be decomposed into a feasible solution~$x^\vartriangle$ inducing a DAG satisfying $Ax^\vartriangle=0$ and $x^{\circ}\geq 0$ satisfying $Ax^{\circ}=0$.
\end{lemma}
\begin{proof}
The proof is very similar to the one of Lemma~\ref{lemma:tree}. Suppose $x\in \RR^n$ is a primal solution and let $C$ be any \emph{consistently oriented} cycle so that for each arc $(u,v)\in C$, $x_{(u,v)}>0$ (if no such~$C$ exists, $x^\vartriangle:=x$ and $x^{\circ}=0$ meet the claim of the lemma). Let $f\in \RR^{2m}$ send $\min\{x_{(u,v)}\,|\,(u,v)\in C \wedge x_{(u,v)}\neq 0\}$ units of flow in positive direction ``around'' $C$. By construction both $f$ and $x-f$ are feasible, and $x-f$ has one arc less carrying non-zero flow. The claim now follows by inductively repeating the argument until no directed cycles remain, where $x^{\circ}$ is the sum of the flows $f$ from the individual steps.
\end{proof}
We would like to sample from $x^\vartriangle$, but have to face the ``disturbance'' by $x^{\circ}$. We will exploit that the flow $x^{\circ}$ is ``cheap'' to bound its effect on the sampled solution. In order to formalize this, we fix some notation.
In the following, let $x\in \RR^{2m}$ be the primal solution for \eqref{formula:asymmetric} returned by Algorithm~\nablaalg when called with accuracy parameter $\varepsilon':=\tfrac{\varepsilon}{6}$. Decompose $x=:x^\vartriangle+x^{\circ}$ according to Lemma~\ref{lemma:decompose}. We define $f_v^\vartriangle:=\sum_{\{v,u\}\in E} x^\vartriangle_{(u,v)}$.
\begin{lemma}\label{lemma:sample_acyclic}
Suppose that for $v\in V\setminus \{s\}$, $b_v\geq 0$. Let each $v\in V$ with $f_v^\vartriangle>0$ sample one incoming arc, where the probability to choose arc $(u,v)\in E$ is $(f_v^\vartriangle)^{-1}x_{(u,v)}^\vartriangle$. Then the resulting arc set is a directed tree $T$ rooted at the source $s$ spanning all nodes with non-zero demand. Denoting by $t\in \RR^{2m}$ the (unique) flow with $At=b$ and $t_e=0$ for any arc $e\notin T$, it holds that $\Exp[\ones^TWt]=\ones^TWx^\vartriangle$.
\end{lemma}
\begin{proof}
Recall that $Ax^\vartriangle=b$. As $s$ is the only node with supply, i.e., $b_v\geq 0$ for all $v\in V\setminus \{s\}$, any such $v$ that has an outgoing arc $(v,u)\in E$ carrying non-zero flow or that satisfies $b_v>0$ must have an incoming arc $(u,v)\in E$ carrying non-zero flow. Thus, we can inductively construct a directed path ending at any such node by following the incoming arc of the (current) first node of the path, until either this first node is $s$ or we close a directed cycle. The latter is not possible, because $x^\vartriangle$ induces a DAG by construction (cf.\ Lemma~\ref{lemma:decompose}). Hence, the sampled graph is a DAG in which each node that sampled an arc is reachable from $s$. Thus, as each node sampled at most one incoming arc, the sampled graph must be a tree $T$ rooted at $s$. As we observed that each $v$ with $b_v>0$ must have an incoming arc carrying non-zero flow, each such $v$ sampled an arc, implying that $T$ spans the nodes of non-zero demand. Accordingly, there is a unique flow $t$ on $T$ such that $At=b$.

It remains to show that $\Exp[\ones^TWt]=\ones^TWx^\vartriangle$.
We show this by induction on the number of nodes in the DAG induced by $ x^\vartriangle $.
The base case where the DAG induced by $ x^\vartriangle $ has $ 0 $ nodes is trivial.
For the induction step, let $v$ be a node so that $x_{(v,u)}^\vartriangle=0$ for all $(v,u)\in E$; such a node must exist, as $x^\vartriangle$ induces a DAG.
If $b_v=0$, $f_v^\vartriangle=b_v=0$ and the claim is trivially true, so suppose $f_v^\vartriangle=b_v>0$. We interpret the random choice of $v$ as follows: $v$ picks an incoming arc $(u,v)$ and we route $b_v$ units of flow from $u$ to $v$, changing demands accordingly by setting $ b_v' = 0 $ and $ b_u' = b_u + b_v $. In expectation, this induces cost
\begin{align*}
K:=\sum_{(u,v)\in E}w_{(u,v)}b_v\cdot\frac{x_{(u,v)}^\vartriangle}{f_v^\vartriangle}=\sum_{(u,v)\in E}w_{(u,v)}x_{(u,v)}^\vartriangle,
\end{align*}
where we used that $f_v^\vartriangle=b_v$ for each $v$.
The modified demands $b'$ satisfy for $(u,v)\in E$ with $x_{(u,v)}>0$ that
\begin{align*}
\Exp[b_u']&=b_u+\frac{x_{(u,v)}^\vartriangle}{f_v^\vartriangle}\cdot b_v\stackrel{f_v^\vartriangle=b_v}{=}b_u+x_{(u,v)}^\vartriangle,
\end{align*}
that $b_v'=0$, and that $b_u'=b_u$ at all other nodes $u$.
Now $ x' $ by $ x'_{(u,v)} = 0 $ and $ x'_{(u',v')} = x_{(u',v')} $ for any other arc $ (u', v') \in E $.
Observe that $ x' $ can be decomposed into $ (x')^\vartriangle $ and $ (x')^\circ $ (as in Lemma~\ref{lemma:decompose}) such that the DAG induced by $ (x')^\vartriangle $ amounts to the DAG induced by $ x^\vartriangle $ with the node $ v $ removed.
Let $ T' $ denote the tree sampled from $ (x')^\vartriangle $.
Recall that for independent random variables $X$ and~$Y$, $\Exp[XY]=\Exp[X]\Exp[Y]$. As the random variable $b'$ does not depend on the random choices of any nodes but $v$, it is independent of the tree $T'$.
We denote by $X_u$ the cost of routing one unit of flow from $s$ to $u\in T\setminus\{v,s\}$ on $T'$. By linearity of expectation we then obtain
\begin{align*}
\Exp\left[\ones^TWt\right]&=K+\Exp\left[\sum_{u\in T\setminus\{v,s\}}b_u'X_u\right]\\
&=K+\sum_{u\in T\setminus\{v,s\}}\Exp\left[b_u'X_u\right]\\
&=K+\sum_{u\in T\setminus\{v,s\}}\Exp\left[b_u'\right]\Exp\left[X_u\right].
\end{align*}
Examining the sum in the final term, observe that when deleting $v$ from the graph, the respective restriction of $x^\vartriangle$ (namely $ (x')^\vartriangle $ routes exactly demands $\Exp[b_u']\geq b_u\geq 0$ from $s$ to each $u\in T\setminus\{v,s\}$. Thus, by the induction hypothesis and, again, linearity of expectation,
\begin{align*}
\sum_{u\in T\setminus\{v,s\}}\Exp\left[b_u'\right]\Exp\left[X_u\right]=\ones^TWx^\vartriangle-\sum_{(u,v)\in E}w_{(u,v)}x_{(u,v)}^\vartriangle=\ones^TWx^\vartriangle-K.
\end{align*}
We conclude that indeed $\Exp[\ones^TWt]=\ones^TWx^\vartriangle$, completing the proof.
\end{proof}
Our goal is now to apply Lemma~\ref{lemma:sample_acyclic} followed by Markov's bound in order to show that sampling arcs is sufficient for obtaining a good tree solution. Alas, the decomposition $x=x^\vartriangle+x^{\circ}$ is unknown, and Lemma~\ref{lemma:sample_acyclic} critically depends on the fact that $x^\vartriangle$ is acyclic. A naive solution would be to sample sufficiently often according to $x$ so that each arc has at least the same probability to be sampled when sampling from the ``correct'' distribution given by $x^\vartriangle$. Unfortunately, it is possible that an arc $(u,v)$ satisfies $x^\vartriangle_{(u,v)}\ll x^{\circ}_{(u,v)}$, conflicting with the requirement of sampling few arcs. We overcome this obstacle by replacing such ``bad'' arcs by a corresponding path in the spanner.

In the following, denote $w_{\min}:=\min_{(u,v)\in E}\{w_{(u,v)}\}$ and partition $E$ into sets $E_{v,k}:=\{{(u,v)\in E} \mid w_{(u,v)}\in [2^{k-1}w_{\min},2^kw_{\min})\wedge x_{(u,v)}>0\}$, where $v\in V$ and $k\in \ZZ_{>0}$. The set $E_{v,k}\neq \emptyset$ is \emph{bad,} if $2\alpha\lambda\sum_{e\in E_{v,k}}x^\vartriangle_e\leq \sum_{e\in E_{v,k}}x^{\circ}_e$. We say that $e\in E$ is bad if the set $E_{v,k}$ containing $e$ is bad.
\begin{lemma}\label{lemma:redirect}
Redirecting the flow $x_{(u,v)}^\vartriangle$ of each bad arc $(u,v)$ over a shortest $u$-$v$ path on an (undirected) $\alpha$-spanner for the graph with weights $w^-$ incurs an additional cost of at most $\varepsilon'\ones^TWx^*$.
\end{lemma}
\begin{proof}
As the maximum ratio between the weights of an edge in opposing directions is $\lambda$, an undirected $\alpha$-spanner for weights $w^-$ clearly allows, for any $(u,v)\in E$, finding a directed path of length at most $\alpha \lambda w_{(u,v)}$ from $u$ to $v$. The cost for rerouting the flow on bad arcs over the spanner is thus bounded by
\begin{align*}
\sum_{\mathrm{bad }(u,v)\in E}\alpha \lambda w_{(u,v)}x_{(u,v)}^\vartriangle&=\sum_{\mathrm{bad }E_{u,k}}\sum_{e\in E_{u,k}}\alpha \lambda w_e x_e^\vartriangle\\
&< \sum_{\mathrm{bad }E_{u,k}}2^k\alpha \lambda w_{\min}\sum_{e\in E_{u,k}} x_e^\vartriangle\\
&\leq \sum_{\mathrm{bad }E_{u,k}}2^{k-1}w_{\min}\sum_{e\in E_{u,k}} x_e^{\circ}\\
&\leq \sum_{\mathrm{bad }E_{u,k}}\sum_{e\in E_{u,k}} w_e x_e^{\circ}\\
&=\sum_{\mathrm{bad }(u,v)\in E}w_{(u,v)}x_{(u,v)}^{\circ}\\
&\leq \ones^T Wx^{\circ}.
\end{align*}
Furthermore, by Lemma~\ref{lemma:correctness} and the fact that $\ones^TWx^\vartriangle\geq \ones^TWx^*$ by feasibility of $x^\vartriangle$ and optimality of $x^*$,
\begin{align*}
\ones^TWx^{\circ}=\ones^TWx-\ones^TWx^\vartriangle\leq (1+\varepsilon')\ones^TWx^*-\ones^TWx^*=\varepsilon'\ones^TWx^*. \qedhere
\end{align*}
\end{proof}

We now are ready to show that sampling sufficiently many arcs according to~$x$ and adding the spanner is, in expectation, almost as good as directly sampling according to $x^\vartriangle$.
\begin{lemma}\label{lemma:sample}
Suppose that for $v\in V\setminus \{s\}$, $b_v\geq 0$. For each $E_{v,k}\neq \emptyset$, sample arc $e=(u,v)\in E_{v,k}$ with independent probability
\begin{align*}
p_e:=\min\left\{\frac{(2\alpha\lambda+1) x_e}{\sum_{e'\in E_{v,k}}x_{e'}},1\right\},
\end{align*}
and repeat this procedure until at least one arc from $E_{v,k}$ is selected. Add an arbitrary such arc, and add all spanner edges. Then the expected cost of an optimal solution on the induced graph is bounded by $(1+2\varepsilon')\ones^TWx^*$.
\end{lemma}
\begin{proof}
By Lemma~\ref{lemma:sample_acyclic}, sampling arc $(u,v)\in E$ with probability $(f_u^\vartriangle)^{-1}x_{(u,v)}^\vartriangle$ at $u\in V$ with $f_u^\vartriangle>0$ results in a flow $t$ on a tree of expected cost at most $\ones^TWx^\vartriangle$. Observe that if the probability to select $(u,v)\in E$ is \emph{at least} as large (and we are guaranteed to select at least one incoming arc for each $v\in V$), the expected cost can only become smaller. Consider an arc $e\in E_{v,k}$ with $p_e<1$ that is not bad. It satisfies that
\begin{align*}
p_e &\geq \frac{(2\alpha\lambda+1) x_e^\vartriangle}{\sum_{e'\in E_{v,k}}x_{e'}}\\
&=\frac{(2\alpha\lambda+1) x_e^\vartriangle}{\sum_{e'\in E_{v,k}}x_{e'}^\vartriangle+\sum_{e'\in E_{v,k}}x_{e'}^{\circ}}\\
&>\frac{(2\alpha\lambda+1) x_e^\vartriangle}{(2\alpha\lambda+1)\sum_{e'\in E_{v,k}}x_{e'}^\vartriangle} & \text{($ e $ is not bad)} \\
&=\frac{x_e^\vartriangle}{f_v^\vartriangle},
\end{align*}
i.e., the probability to select such an arc is sufficiently large. The same is trivially true if $p_e=1$. Hence, it remains to address bad arcs.

To this end, we apply Lemma~\ref{lemma:redirect} to see that we can reroute their flow over the spanner at an additive additional cost of at most $\varepsilon'\ones^T Wx^*$. Logically, we can reflect this by replacing the weight of such an arc by the weight of a respective shortest path in the spanner and adding all bad arcs to the sample (i.e., ``sampling'' each bad arc with probability $1$); denoting the new weights by $W'$, we are then guaranteed that the union of the spanner and the actually sampled arcs contains a solution of expected cost at most
\begin{align*}
\Exp\left[\ones^TW't\right]= \ones^TW'x\leq \ones^TWx+\varepsilon'\ones^T Wx^*\leq (1+2\varepsilon')\ones^TWx^*. \qedhere
\end{align*}
\end{proof}

\begin{corollary}\label{coro:sample}
Performing the sampling procedure of Lemma~\ref{lemma:sample} and returning an optimum primal tree solution on the graph induced by the sampled arcs and the spanner yields a $(1+\tfrac{2}{3}\varepsilon)$-approximation with probability at least~$\tfrac{1}{2}$.
\end{corollary}
\begin{proof}
Lemma~\ref{lemma:sample} shows that the expected cost of an optimal solution on the stated arc set is at most $(1+2\varepsilon')\ones^TWx^*=(1+\tfrac{\varepsilon}{3})\ones^TWx^*$. Applying Markov's bound to the (positive) random variable that is the amount by which this cost exceeds that of an optimal solution shows that the cost is at most $(1+\tfrac{2}{3}\varepsilon)\ones^TWx^*$ with probability at least~$\tfrac{1}{2}$. Lemma~\ref{lemma:tree} shows that there is an optimal solution that is a forest, and as there is only one source, it is a tree.
\end{proof}

To complete the discussion of the sampling procedure of Lemma~\ref{lemma:sample}, it remains to show that it will stop fast enough, i.e., that there is a constant probability of sampling at least one arc.
In order to show this, we use the following general observation.

\begin{lemma}\label{lem:arithmetic_geometric_bound}
Let $ p $ be a finite probability mass function for a sample space of size~$ t $, i.e., $ p_i \in [0, 1] $ for all $ 1 \leq i \leq t $ and $ \sum_{1 \leq i \leq t} p_i = 1 $.
Then $ \prod_{1 \leq i \leq t} (1 - p_i) \leq \tfrac{1}{e} $.
\end{lemma}

\begin{proof}
By the inequality of arithmetic and geometric means we have
\begin{equation*}
\left( \prod_{1 \leq i \leq t} (1 - p_i) \right)^{1/t} \leq \frac{1}{t} \cdot \sum_{1 \leq i \leq t} (1 - p_i) = \frac{1}{t} \cdot \left( t - \sum_{1 \leq i \leq t} p_i \right) = \frac{t-1}{t} = 1 - \frac{1}{t} \, .
\end{equation*}
It now follows that
\begin{equation*}
\prod_{1 \leq i \leq t} (1 - p_i) \leq \left( 1 - \frac{1}{t} \right)^t \leq \frac{1}{e} \, ,
\end{equation*}
where the latter inequality can be derived from the limit definiton of Euler's number.
\end{proof}

\begin{lemma}\label{lem:probability to sample at least one arc}
The probability that no arc of $ E_{v,k} $ is sampled in the procedure given in Lemma~\ref{lemma:sample} is at most $ \tfrac{1}{e} $.
\end{lemma}

\begin{proof}
For every arc $ e \in E_{v,k} $, set $ p_e' = \tfrac{x_e}{\sum_{e'\in E_{v,k}}x_{e'}} $.
Clearly, $ p_e' \leq p_e $ and $ p_e' \in [0, 1] $ for every arc $ e \in E_{v,k} $ and $ \sum_{e \in E_{v,k}} p_e' = 1 $.
With Lemma~\ref{lem:arithmetic_geometric_bound} we get that the probability that no arc is sampled is
\begin{equation*}
\prod_{e \in E_{v,k}} (1 - p_e) \leq \prod_{e \in E_{v,k}} (1 - p_e') \leq \frac{1}{e} \, . \qedhere
\end{equation*}
\end{proof}

We put all the steps for constructing a primal tree solution together in the pseudocode given in Algorithm~\ref{alg:tree}.
We summarize the guarantees of this procedure as follows.

\begin{algorithm2e}[ht!]
	\SetKw{Break}{break}
	$ (x, y) := \nablaalg(G, b, \frac{\varepsilon}{6})$ \tcp*{$ (1+\frac{\varepsilon}{6}) $-approximate pair} \nllabel{line:refined approximation to transshipment}
   	\ForEach{$ v \in V $ and $ k \in \{1, \dots, \lceil \log \| w \|_\infty \rceil \} $}{ \nllabel{line:begin sampling}
 	    $E_{v,k}:=\{(u,v)\in E\,|\,w_{(u,v)}\in [2^{k-1}, 2^k)\wedge x_{(u,v)}>0\}$\;
		\If{$ E_{v,k} \neq \emptyset $}{
			$ F_{v,k} := \emptyset $\;
			\For{$ \lceil \log (4 n \lceil \log \| w \|_\infty \rceil) \rceil $ repetitions}{
				\If{$ F_{v,k} = \emptyset $}{
					\ForEach{$ e \in E_{v,k} $}{
						Add $ e $ to $ F_{v,k} $ with probability $ p_e:=\min \left\{ \frac{(2\alpha\lambda+1) x_e}{\sum_{e'\in E_{v,k}}x_{e'}},1 \right\} $\;
						\If{$ |F_{v,k}| = 16 \alpha \lambda + 8 $}{
							$ F_{v,k} \gets \emptyset $\;
							\Break;
						}
					}
				}
			}
		}
	} \nllabel{line:end sampling}
	Construct $ \alpha $-spanner $ S $ of $ G^- = (V, E, w^-) $\nllabel{line:construct spanner S}\;
	Construct graph $ H $ consisting of all sampled arcs $ \bigcup_{v, k} F_{v, k} $ and $ S $\nllabel{line:construct auxiliary graph}\;
	Compute optimal primal tree solution $ t $ and dual solution $ y'$ on $ H $ \nllabel{line:compute optimal tree solution on auxiliary graph}\;
	\eIf(\tcp*[h]{$ (1 + \epsilon) $-approximate pair}){$ \ones W t \leq (1 + \epsilon) b^T y $}{
		\Return $ (t, y) $ \tcp*{algorithm succeeded}
	}{
		\Return $ (\bot, \bot) $ \tcp*{algorithm failed} \nllabel{line:last line of primal tree}
	}
	\caption{\treealgo$(G, b, \varepsilon)$}
	\label{alg:tree}
\end{algorithm2e}

\begin{theorem}\label{theorem:average tree}
If $b_v\in \{0,1\}$ for all $v\in V\setminus \{s\}$, then, given any $ 0 < \varepsilon \leq 1 $, with probability at least~$ \tfrac{1}{4} $, algorithm~\treealgo computes a $(1+\varepsilon)$-approximate primal-dual pair of solutions to~\eqref{formula:asymmetric} in a bidirected graph with positive integer arc weights, where the primal solution has non-zero flow only on the arcs of a tree.
\end{theorem}

\begin{proof}
By Theorem~\ref{theorem:asymmetric}, the primal-dual pair $ (x, y) $ computed by algorithm \nablaalg is a $ (1 + \tfrac{\epsilon}{6}) $-approximate pair, i.e., $ \ones W x \leq (1 + \tfrac{\epsilon}{6}) b^T y $.
For any $ v \in V $ and any $ k \in \{1, \dots, \lceil \log \| w \|_\infty \rceil \} $ we have that for each repetition of the sampling, no arc is sampled, i.e., $ F'_{v,k} = \emptyset $, with probability at most $ \tfrac{1}{e} $ by Lemma~\ref{lem:probability to sample at least one arc}.
Furthermore, by linearity of expectation, the expected number of sampled arcs is $ 2 \alpha \lambda + 1 $.
By the Markov inequality, this expectation is exceeded by a factor of~$ 8 $ with probability at most~$ \tfrac{1}{8} $.
The probability that $ | F_{v,k} | $ reaches size $ 16 \alpha \lambda + 8 $ is therefore at most~$ \tfrac{1}{8} $.
Thus, in each repetition of the sampling, the probability that $ F_{v,k} = \emptyset $ is at most $ \tfrac{1}{e} + \tfrac{1}{8} \leq \tfrac{1}{2} $.
If this happens for all repetitions of the sampling, the algorithm sets $ F_{v,k} = \emptyset $ and might then not be correct.
The probability of this event falls exponentially with the number of repetitions, i.e., is at most $ \tfrac{1}{2^{\lceil \log (4 n \lceil \log \| w \|_\infty \rceil) \rceil}} \leq \tfrac{1}{4 n \lceil \log \| w \|_\infty \rceil} $.
Thus, the probability that $ F_{v,k} = \emptyset $ after the sampling for some $ v \in V $ and some $ k \in \{1, \ldots, \lceil \log \| w \|_\infty \rceil \} $ is, again by the union bound, at most $ n \lceil \log \| w \|_\infty \rceil \cdot \tfrac{1}{4 n \lceil \log \| w \|_\infty \rceil} = \tfrac{1}{4} $.
Moreover, Corollary~\ref{coro:sample} states that computing an optimal tree solution $ t $ using only sampled and spanner arcs yields a $(1+\tfrac{2}{3}\varepsilon)$-approximation, i.e., $ \ones W t \leq (1 + \tfrac{2}{3} \epsilon) \ones W x^* $, with probability at least~$\tfrac{2}{3}$; we compute such a solution.
Applying the union bound, we conclude that with probability at least~$\tfrac{1}{2} - \tfrac{1}{4} = \tfrac{1}{4}$, both all the sets of sampled arcs are non-empty and the resulting solution is of sufficient quality.
In that case we have
\begin{equation*}
\ones W t \leq (1 + \tfrac{2}{3} \epsilon) \ones W x^* \leq (1 + \tfrac{2}{3} \epsilon) \ones W x \leq (1 + \tfrac{2}{3} \epsilon) (1 + \tfrac{1}{6} \epsilon) \ones b^T y \leq (1 + \epsilon) b^T y \, ,
\end{equation*}
i.e., $ (t, y) $ is a $ (1 + \epsilon) $-approximate primal-dual pair and the algorithm succeeds.
\end{proof}

\subsection{Computing an Approximate Shortest-Path Tree}\label{sec:per_node}

Next, we address the point that an on-average guarantee is typically insufficient when considering the single-source shortest path problem. Rather, we would like an \emph{approximate shortest-path tree,} i.e., a tree rooted at the source $s$ such that, for each $v\in V$, the distance from $s$ to $v$ in the tree is larger by a factor of at most $1+\varepsilon$ than the distance in $G$.

Our approach here is straightforward.
The algorithm (see Algorithm~\ref{alg:sssp-randomized} for pseudocode) starts with solving a ``single-source'' transshipment instance, in which $ s $ has demand $ -n + 1 $ and every other node has demand $ 1 $, up to increased precision $ \varepsilon' = \tfrac{\varepsilon}{12} $.
In particular, it computes a primal tree solution providing and ``on-average'' guarantee and a dual solution.
Next, it determines the distances from the source to every node in this tree.
By weak duality, we know that for all nodes whose distance in the tree exceeds their dual variable by a factor at most $1+12\varepsilon' \leq 1 + \varepsilon$ we have already found a suitable approximate shortest path in the current tree.
Our algorithm will consider these nodes ``done'', sets their demands to~$ 0 $ and repeats this process with the modified transshipment instance.
We can argue that the nodes whose distance in the tree exceeds their dual variable by a factor at most $1+12\varepsilon'$ make up a constant fraction of the optimal value of the transshipment instance.
Therefore, this algorithm will be finished after logarithmically many repetitions.
Now the desired approximate shortest path tree can be found by taking the union of all the intermediate trees with ``on-average'' guarantee and computing a shortest path tree in the respective graph.

\begin{algorithm2e}[ht]
	Set $E':=\emptyset$, $b:=\ones -n\ones_s$, and $\varepsilon':=\tfrac{\varepsilon}{12}$.\;
	\While{$b_s<0$}{
	    $ (T, y) := \treealgo(G, b, \varepsilon') $ \nllabel{line:compute primal tree solution}\;
	    \If{$ (T, y) \neq (\bot, \bot) $}{
			Set $E'\leftarrow E'\cup T$\;
			\ForEach{$v\in V$ with $b_v=1$}{
				\If{$d_T(v)\leq (1+12\varepsilon')(y_v-y_s)$}{
					Set $b_v \leftarrow 0$ and $b_s \leftarrow b_s +1$
				}
			}
		}
	}
	Compute shortest-path tree $T'$ on $(V,E',w|_{E'})$ rooted at $s$\;
	\Return{$T'$}
	\caption{\ssspalgo$(G, s, \varepsilon)$}
	\label{alg:sssp-randomized}
\end{algorithm2e}

In the following, assume w.l.o.g.\ that $y_s=y_s^*=0$, i.e., for each $v\in V$, $y_v^*$ is the distance from $s$ to $v$ and $y_v\leq y_v^*$ is an approximation thereof. Thus, $b^Ty^*$ is the sum of all distances from $s$, and $b^Ty$ is close to this sum, where no distance is overestimated. Assigning to each node $v\in V$ weight~$y^*_v$ (i.e., its distance to $s$), this entails that the sum of weights of nodes satisfying $y_v\geq \tfrac{y_v^*}{1+4\varepsilon'}$ is a constant fraction of the total weights.
\begin{lemma}\label{lemma:good_dual}
Assume that $b_v \ge 0$ for all $v \in V \setminus \{s\}$, $y \in \RR^n$ is a $(1+\varepsilon')$-approximate dual solution to \eqref{formula:asymmetric}, and $Y := \{ v \in V \setminus \{s\}: y_v < y_v^*/(1+4\varepsilon')\}$ for some $\varepsilon' \le \tfrac{1}{4}$. Then,
 $
  \sum_{v \in Y} b_v y_v^* \le \tfrac{1}{2} b^Ty^*.
 $
\end{lemma}
\begin{proof}
We have that
 \begin{align*}
  3\varepsilon' \sum_{v \in Y} b_v y_v &\le 3\varepsilon' \sum_{v \in Y} b_v y_v + (1+\varepsilon') b^T y   - b^T y^* \\
  &= (1+4\varepsilon')\sum_{v \in Y} b_v y_v + (1+\varepsilon') \sum_{v \in V \setminus Y} b_v y_v - b^T y^* \\
  &< \sum_{v \in Y} b_v y_v^*         + (1+\varepsilon') \sum_{v \in V \setminus Y} b_v y_v - b^T y^* \\
  &= (1+\varepsilon') \sum_{v \in V \setminus Y} b_v y_v - \sum_{v \in V \setminus Y} b_v y_v^* \\
  &\le \varepsilon' \sum_{v \in V \setminus Y} b_v y_v,
 \end{align*}
where the last inequality follows from the observation that $\qmaxstretch{y}=1$ guarantees that $y_v \le y_v^*$ for all $v \in V$, the fact that $b_v\ge 0$ for all $v\in V\setminus \{s\}$, and the assumption that $y_s=0$. It follows that $\sum_{v\in V\setminus Y}b_v y_v \ge \tfrac{3}{4}b^Ty$. We conclude that
 \begin{align*}
  \sum_{v \in Y} b_v y_v^*
  &=  b^T y^* - \sum_{v \in V \setminus Y} b_v y_v^*\\
  &\le (1+\varepsilon')b^T y - \sum_{v \in V \setminus Y} b_v y_v\\
  &\leq \left(\frac{1}{4}+\varepsilon'\right)b^T y\\
  &\stackrel{\varepsilon'\leq \tfrac{1}{4}}{\le} \frac{1}{2} b^T y^*. \qedhere
 \end{align*}
\end{proof}
In other words, with respect to node weights $y_v^*$, we have a good lower bound on distances for a large fraction of the total weight. A similar argument applies to a primal solution on a tree, which yields good upper bounds on distances for a large fraction of the weight.
\begin{lemma}\label{lemma:good_primal}
Assume that $\varepsilon'\leq \tfrac{1}{4}$ and $b_v\geq 0$ for all $v\in V\setminus \{s\}$. Moreover, suppose that $t\in \RR^{2m}$ is a $(1+\varepsilon')$-approximate primal solution to \eqref{formula:asymmetric} satisfying $t_e=0$ for all $e\in E\setminus T$, where $T$ is a tree rooted at~$s$. Denote by $d_T(v)$ the distances from $s$ to $v\in V$ in $T$. Then $X := \{ v \in V \setminus \{s\}: d_T(v) > (1+4\varepsilon')y^*_v\}$ satisfies that
 $
  \sum_{v \in X} b_v y_v^* \le \tfrac{1}{4} b^Ty^*.
 $
\end{lemma}
\begin{proof}
Recall that $y_s^*=0$, i.e., the (minimal) cost of routing $b_v$ units of flow from $s$ to $v$ in $G$ is $b_v y_v^*$, $b_v$ times the distance from $v$ to $s$. Similarly, routing $b_v$ units of flow from $s$ to $v$ in $T$ incurs cost $b_v d_T(v)$. Hence,
\begin{align*}
(1+\varepsilon')b^T y^* = (1+\varepsilon')\ones^TWx^* &\geq \ones^TWt\\
&=\sum_{v\in V}b_v d_T(v)\\
&=\sum_{v\in X}b_v d_T(v)+\sum_{v\in V\setminus X}b_v d_T(v)\\
&\geq \sum_{v\in X}b_v(1+4\varepsilon')y^*_v+\sum_{v\in V\setminus X}b_v y^*_v\\
&=4\varepsilon'\sum_{v\in X} b_vy^*_v +b^Ty^*.
\end{align*}
Subtracting $b^Ty^*$ and diving by $4\varepsilon'$, the claim follows.
\end{proof}
Together, these lemmas show that a tree solution and a dual solution provide us with the means to detect nodes for which the distances in $T$ are good approximations to their distances in $G$, while ensuring sufficient progress.
\begin{corollary}\label{coro:good}
Use the notation of Lemmas~\ref{lemma:good_dual} and~\ref{lemma:good_primal} and assume that their preconditions are satisfied. Then $Z:=\{v\in V\setminus \{s\}: d_T(v)>(1+12\varepsilon')y_v\}$ satisfies $\sum_{v \in Z} b_v y_v^* \leq \tfrac{3}{4} b^Ty^*$.
\end{corollary}
\begin{proof}
Any $v\in V\setminus (X\cup Y)$ satisfies
\begin{align*}
d_T(v)\leq (1+4\varepsilon')y^*_v\leq (1+4\varepsilon')^2y_v\stackrel{\varepsilon'\leq \tfrac{1}{4}}{\leq}(1+12\varepsilon')y_v.
\end{align*}
Hence, $Z\subseteq X\cup Y$ and the claim follows from the bounds provided by the lemmas.
\end{proof}

\begin{theorem}\label{theorem:tree}
Given any $ 0 < \varepsilon \leq 1 $, algorithm~\ssspalgo computes a $ (1 + \varepsilon) $-approximate single-source shortest path tree in bidirected graphs with positive polynomially bounded integer arc weights, i.e., for each $v\in V$, the distance from $s$ to~$v$ in the returned tree is at most factor $1+\varepsilon$ larger than in the input graph.
The arc set~$ E' $ for the final shortest path computation has size $ O (n \log \| w \|_\infty) $ and the algorithm performs $ O (c \log \| w \|_\infty) $ iterations with probability at least $1 - n^{-c}$ for every $c \geq 1 $.
\end{theorem}

\begin{proof}
Observe that $b$ remains integral and feasible throughout the algorithm and $b_v \in \{0, 1\}$ for all $v\in V\setminus\{s\}$.
Therefore, $b^T y^*<1$ (the minimum arc weight) implies that $b=0$ and thus that the algorithm terminates.
As $b^Ty^*$ is initially bounded from above by $n^2 \| w \|_\infty $, $n$ times the upper bound of $n \| w \|_\infty$ on the length of any shortest path, Corollary~\ref{coro:good} implies that the outer loop of the algorithm terminates after $ r := \lfloor \log_4 ( n^2 \| w \|_\infty) \rfloor + 1 = O (\log (n \| w \|_\infty)) $ iterations for which the call of procedure $ \treealgo $ was successful.
For each iteration this happens with probability at least~$ \tfrac{1}{4} $ by Theorem~\ref{theorem:average tree}.
The algorithm ensures that for each $v\in V\setminus \{s\}$, there is an iteration in which the tree $T$ added to $E'$ and the dual solution $y$ satisfy
\begin{align*}
d_T(v)\leq (1+12\varepsilon')(y_v-y_s)=(1+\varepsilon)(y_v-y_s)\leq (1+\varepsilon)(y_v^*-y_s^*),
\end{align*}
i.e., the distance from $s$ to~$v$ in~$T$ is at most factor $1+\varepsilon$ larger than the distance from $s$ to $v$ in $G$. It follows that the returned shortest-path tree on $(V,E',W|_{E'})$ satisfies the claimed approximation guarantee.
The probabilistic bound on the number of iterations is immediate from standard Chernoff bounds.
As arcs are added to $ E' $ only in the $ O (\log (n \| w \|_\infty)) $ iterations with a successful call of procedure $ \treealgo $ and in each such iteration we add a tree of $ O (n) $ arcs, the size of $ E' $ is $ O (n \log (n \| w \|_\infty)) $.
\end{proof}

\section{Implementation in Distributed and Streaming Models}\label{sec:apps}

In the following we explain how to implement the gradient descent algorithm for computing $ (1+\varepsilon) $-approximate transshipments and SSSP in distributed and streaming models.
Common to all these implementations is the use of sparse \emph{spanners.}
We give deterministic algorithms for computing spanners in distributed and streaming models in Section~\ref{sec:spanner}.

\subsection{Broadcast Congested Clique}\label{sec:clique}

\subparagraph*{Model}
In the Broadcast Congested Clique model, the system consists of $n$ fully connected nodes labeled by unique $O(\log n)$-bit identifiers. Computation proceeds in synchronous rounds, where in each round, nodes may perform arbitrary internal computations, broadcast (send) an $O(\log n)$-bit message to all other nodes, and receive the messages from other nodes. The input is distributed among the nodes.
The first part of the input of every node consists of its incident arcs (given by their endpoints' identifiers) and their weights.
The weights are assumed to be polynomially bounded in $n$, so that they can be encoded using $O(\log n)$ bits.
The second part of the input is problem-specific:
for the transshipment problem, every node $ v $ knows its demand~$ b_v $ and for SSSP $ v $ knows whether or not it is the source $ s $.
In both cases, every node knows~$\varepsilon$ as well.
Each node needs to compute its part of the output.
For the transshipment problem, every node needs to know a $ (1+\varepsilon) $ approximation of the optimum value, and for SSSP every node needs to know a $ (1+\varepsilon) $-approximation of its distance to the source.
For a primal solution to transshipment, every node must know the flow over its incident arcs, and for a dual solution every node must know its potential.
In the case of SSSP, every node needs to know a $ (1+\varepsilon) $-approximation of its distance to the source.
In terms of primal solutions for (approximate) SSSP, only tree solutions are of interest.
Here, each node must know its parent, which is equivalent to knowing the next routing hop on an approximately shortest path to the source.
The complexity of the algorithm is measured in the worst-case number of rounds until the computation is complete.

\subparagraph*{Implementing Algorithm~\nablaalg}

In the following, we explain how to implement our gradient descent algorithm for approximating the transshipment problem.

\begin{theorem}\label{theorem:BCC transshipment_asym}
Given any $ 0 < \varepsilon \leq 1 $, in the Broadcast Congested Clique model a $(1+\varepsilon)$-approximate primal-dual pair of solutions to the transshipment problem in bidirected graphs with positive arc weights can be computed deterministically in $ \widetilde  O (\lambda^2 \varepsilon^{-2}) $ rounds.
\end{theorem}

\begin{proof}
We will show that each iteration of Algorithm~\ref{alg:gradient} can be implemented in a constant number of rounds and that the remainder of the algorithm can be implemented in $ O (\log^2 n) $ rounds.
By Theorem~\ref{theorem:asymmetric}, which bounds the number of iterations of Algorithm~\ref{alg:gradient}, a round complexity of $ O (\lambda^2 (\varepsilon^{-2} + \log{n} + \log{\lambda}) \log^3{n}) $ follows.
For $ \lambda \in \Omega (n) $ we can solve the problem in $ O (n) $ rounds by making all arcs global knowledge.
Therefore, this simplifies to $ \widetilde O (\lambda^2 \varepsilon^{-2}) $.

We will explain how to implement the following lines of the algorithm:
Lines \ref{line:initial oracle call} and~\ref{line:oracle call in iteration}, where the algorithm performs oracle calls, Lines \ref{line:set_beta} and~\ref{line:double_beta}, where the algorithm needs to compute $ (W^{-1}A^T\pi)_{\max} $, Line~\ref{line:compute gradient}, where the algorithm computes $ \nabla\potfun_\beta(\pi) $, and Line~\ref{line:primal solution xt}, where the algorithm computes $ \xt = W^{-1}\nabla \softmax[\beta]{W^{-1}A^T\pi} $.
We implement the latter by computing, for every node $ v $, $ \xt_e $ for every incident arc $ e $.
Once we have explained how to implement the lines mentioned above, it is clear that the rest of the algorithm can be implemented by performing only local computation.

We implement the oracle calls in Lines \ref{line:initial oracle call} and~\ref{line:oracle call in iteration} as follows.
We first make $ b \in \mathbb{R}^n $ known to every node in one round.
Then we compute an $ O (\log{n}) $-spanner of $ G^- = (V, E, w^-) $ with $ O (n \log n) $ edges and make it known to every node.
Using the randomized algorithm of Baswana and Sen~\cite{BaswanaS07}, computing the spanner takes $ O (\log^2 n) $ rounds.
In Appendix~\ref{sec:spanner}, we provide a deterministic version of this algorithm that performs $ O (\log^2 n) $ rounds and computes a spanner of size $ O (n \log n) $.
For the spanner computed in this way, each node knows a subset of its incident spanner edges of size $ O (\log n) $ such that the union of all nodes' subsets equals the whole set of spanner edges (i.e., the arboricity of the spanner is $ O (\log n) $).
Therefore the spanner can be made global knowledge in $ O (\log n) $ rounds.
The oracle calls in Lines \ref{line:initial oracle call} and~\ref{line:oracle call in iteration} require only internal computation.

To explain how Lines \ref{line:set_beta}, \ref{line:compute gradient}, \ref{line:double_beta}, and~\ref{line:primal solution xt} can be implemented in a constant number of rounds each, we introduce the following notation.
For every arc $ e = (u, v) \in E $ define the \emph{stretch} under potentials~$ \pi $ as
\begin{equation*}
s_e (\pi) := \frac{\pi_v - \pi_u}{w_e} \, .
\end{equation*}
Note that $ W^{-1}A^T\pi $ is the vector containing the stretches of the arcs under potential~$ \pi $.
For every node $ v $, let $ \sumarcs_v $ and $ \maxarcs_v $ be the following quantities over its incoming arcs:
\begin{align*}
\sumarcs_v &:= \sum_{e = (u, v) \in E} e^{\beta s_e (\pi)} \\
\maxarcs_v &:= \max \{ s_e (\pi) : e = (u, v) \in E \} \, .
\end{align*}
Thus, for every arc $ e \in E $,
\begin{equation*}
(W^{-1}\nabla \softmax[\beta]{W^{-1}A^T\pi})_e = \frac{e^{\beta s_e (\pi)}}{w_e \cdot \sum_{e' \in E} e^{\beta s_{e'} (\pi)}} = \frac{e^{\beta s_e (\pi)}}{w_e \cdot \sum_{v \in V} \sumarcs_v} \, ,
\end{equation*}
for every node $ v $,
\begin{align*}
\nabla \potfun_\beta(\pi)_v
  &= (A W^{-1} \nabla\softmax[\beta]{W^{-1}A^T\pi})_v \\
  &= \sum_{e = (u, v) \in E} \frac{e^{\beta s_e (\pi)}}{w_e \cdot \sum_{v \in V} \sumarcs_v} - \sum_{e = (v, u) \in E} \frac{e^{\beta s_e (\pi)}}{w_e \cdot \sum_{v \in V} \sumarcs_v} \, ,
\end{align*}
and
\begin{equation*}
(W^{-1}A^T\pi)_{\max} = \max \{ s_e (\pi) : e = (u, v) \in E \} = \max \{\maxarcs_v : v \in V \} \, .
\end{equation*}
The vector $ \pi \in \mathbb{R}^n $ can be made global knowledge in a single round.
As every node~$ v $ knows its incident arcs and their respective weights, it can then internally compute $ s_e (\pi) $ for every incident arc $ e $ as well as~$ \sumarcs_v $ and $ \maxarcs_v $.
In two rounds of communication, the values $ \sumarcs_v $ and $ \maxarcs_v $ of all nodes~$ v $ can be made global knowledge.
Once these values are known, each node can locally compute $ {\nabla\potfun_\beta(\pi)}_v $, $ (W^{-1}A^T\pi)_{\max} $, and $ (W^{-1}\nabla \softmax[\beta]{W^{-1}A^T\pi})_e $ for each incident arc $ e $, using the equations above.
The vector $ {\nabla\potfun_\beta(\pi)} \in \mathbb{R}^n $ can then be made globel knowledge in a single round.
\end{proof}

\subparagraph*{Implementing Algorithm~\treealgo}

In the following we explain how to implement our sampling-based algorithm to compute a primal tree solution for the approximate transshipment problem.

\begin{theorem}\label{theorem:BCC tree solution_asym}
If $b_v\in \{0,1\}$ for all $v\in V\setminus \{s\}$, then in the Broadcast Congested Clique model, given any $ 0 < \varepsilon \leq 1 $, with probability at least $ \tfrac{1}{4} $,
a $(1+\varepsilon)$-approximate primal-dual pair of solutions to the transshipment problem in bidirected graphs with positive polynomially bounded integer arc weights, where the primal solution has non-zero flow only on the arcs of a tree, can be computed in $ \widetilde O (\lambda^2 \varepsilon^{-2}) $ rounds.
\end{theorem}

\begin{proof}
By Theorem~\ref{theorem:BCC transshipment_asym}, the call to Algorithm~\nablaalg in Line~\ref{line:refined approximation to transshipment} takes $ O (\lambda^2 \varepsilon^{-2}) $ rounds.
Recall that the corresponding primal solution~$ x $ is known locally to the nodes and therefore each node $ v $ knows $ x_{(u, v)} $ for each incoming arc $ (u, v) $.
After determining $ \lceil \log \| w \|_\infty \rceil $ in a single round, the sampling in Lines~\ref{line:begin sampling} to~\ref{line:end sampling} can thus be performed without additional communication.
As $ |F_{v, k}| = O (\alpha \lambda) $ for each $ v \in V $ and each $ k \in \{ 1, \ldots, \log \| w \|_\infty \} $, the set $ \bigcup_{v, k} F_{v, k} $ can be made global knowledge in $ O (\alpha \lambda \log \| w \|_\infty ) $ rounds, which for $ \alpha = O (\log n) $ is $ O (\lambda \log n \log \| w \|_\infty) $.
Since $ \| w \|_\infty $ is polynomially bounded, this term is dominated by $ O (\lambda^2 \varepsilon^{-2}) $.
To implement Line~\ref{line:construct spanner S}, we run the algorithm of Corollary~\ref{coro:deterministic spanner} that performs $ O (\log^2 n) $ rounds to compute a spanner~$ S $ of stretch $ \alpha = O (\log n) $ and make it known to all nodes.
Once $ S $ and $ \bigcup_{v, k} F_{v, k} $ are global knowledge, Lines~\ref{line:construct auxiliary graph} to~\ref{line:last line of primal tree} require no additional communication.
\end{proof}

\subparagraph*{Implementing Algorithm~\ssspalgo}

Finally, we explain how to implement our adaptive-demands algorithm for computing an approximate single-source shortest path tree.

\begin{theorem}\label{theorem:BCC sssp_asym}
Given any $ 0 < \varepsilon \leq 1 $, in the Broadcast Congested Clique model a $(1+\varepsilon)$-approximation to the single-source shortest path problem in bidirected graphs with positive polynomially bounded integer arc weights can be computed in $ \widetilde O (c \lambda^2 \varepsilon^{-2}) $ rounds with probability at least $ 1 - n^{-c} $ for every $ c \geq 1 $.
\end{theorem}

\begin{proof}
By Theorem~\ref{theorem:tree}, the number of iterations of the while loop is $ O (c \log \| w \|_\infty) $ with probability $ 1 - n^{-c} $.
Observe that Line~\ref{line:compute primal tree solution}, which calls the algorithm for computing a primal tree solution, is the only place in the algorithm, where communication between the nodes is necessary (if the primal tree solution $ T $ and the dual solution~$ y $ are made global knowledge in $ O (1) $ rounds after each call).
Therefore, the round complexity of computing an approximate SSSP tree exceeds the round complexity of computing a primal tree solution to the transshipment problem (as stated in Theorem~\ref{theorem:BCC transshipment_asym}) by a factor of $ O (c \log \| w \|_\infty) $, which is $ O (c \log n) $ as $ \| w \|_\infty $ is polynomially bounded.
\end{proof}

As discussed at the beginning of Section~\ref{sec:transshipment} this result extends to graphs with arbitrary non-negative arc weights at the cost of mild overheads.
Note that in the literature the approximate SSSP problem has also been studied in a relaxation of the Broadcast Congested Clique that does not require the weights to be polynomially bounded by increaseing the message size to $ \Theta (\log n + \log \| w \|_\infty) $~\cite{ElkinN16}.
By a reduction of Klein and Subramanian~\cite{KleinS97}, that has been adapted to the Broadcast Congested Clique by Elkin and Neiman~\cite{ElkinN16}, we still may assume without loss of generality that our approximate SSSP algorithm is called on a graph with largest arc weight $ \| w \|_\infty = O (n \epsilon^{-1}) $.
Furthermore, we may assume that $ \epsilon^{-1} = O (n) $ as our running time anyway depends on $ \epsilon^{-1} $ and thus in the regime $ \epsilon^{-1} = \Omega (n) $ we might as well compute a solution with the Bellman-Ford algorithm, which takes $ O (n) $ rounds.
In this way we arrive at polynomially bounded arc weights again.

\subsection{Broadcast CONGEST Model}\label{sec:congest}

\subparagraph*{Model}
The broadcast CONGEST model differs from the broadcast congested clique in that communication is restricted to edges that are present in the input graph. That is, node $v$ receives the messages sent by node $u$ if and only if $\{u, v\}\in \bar{E}$. All other aspects of the model are identical to the broadcast congested clique. We stress that this restriction has significant impact, however: Denoting the hop diameter\footnote{That is, the diameter of the unweighted graph $G=(V,E)$.} of the input graph by $D$, it is straightforward to show that $\Omega(D)$ rounds are necessary to solve the transshipment problem. Moreover, it has been established that $\Omega(\sqrt{n}/\log n)$ rounds are required even on graphs with $D\in O(\log n)$~\cite{DasSarmaHKKNPPW12}. Both of these bounds apply to randomized approximation algorithms (unless the approximation ratio is not polynomially bounded in $n$).

\subparagraph*{Solving the Transshipment Problem}
A straightforward implementation of our algorithm in this model simply simulates the broadcast congested clique. A folklore method to simulate (global) broadcast is to use ``pipelining'' on a breadth-first-search (BFS) tree.
\begin{lemma}[cf.~\cite{Peleg:book}]\label{lemma:broadcast}
Suppose each $v\in V$ holds $m_v\in \ZZ_{\ge 0}$ messages of $O(\log n)$ bits each, for a total of $M= \sum_{v\in V} m_v$ strings. Then all nodes in the graph can receive these $M$ messages within $O(M+D)$ rounds.
\end{lemma}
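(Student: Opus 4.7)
The plan is to reduce the task to two pipelined data-movement operations on a single breadth-first search (BFS) tree. First, I would fix an arbitrary node $r$ and construct a BFS tree $T$ rooted at $r$ in $O(D)$ rounds by the standard flooding procedure: $r$ broadcasts a start token, every other node $v$ adopts as its parent the first neighbor from which it hears this token (breaking ties by identifier) and itself broadcasts the token in the next round. Every node's depth in $T$ then equals its hop-distance from $r$, which is at most $D$.

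Next I would perform a \emph{pipelined convergecast} to gather all $M$ messages at $r$. Fix any total order on the messages (for example, sort by source identifier, then by local index); this order is implicit, requires no communication, and is known to every node. Each non-root node $v$ maintains a FIFO queue of the messages it has seen but not yet delivered upward, initialized with its own $m_v$ messages. In each round, $v$ broadcasts the smallest-indexed unsent message in its queue; since its parent is a neighbor, the parent receives it and appends newly-seen messages to its own queue in the global order. The broadcast is also received by every other neighbor of $v$, but non-parents simply ignore it. A symmetric \emph{pipelined broadcast} is then run from $r$ down $T$: once $r$ has the full set, it broadcasts the messages one per round in the fixed order, and every internal node rebroadcasts messages it has newly received; because the operation is genuinely a broadcast, a single transmission by $v$ is simultaneously heard by all of its children.

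The main technical point is the pipelining analysis for the convergecast (the broadcast phase is analogous). A clean invariant is that the $i$-th message in the global order reaches $r$ by round at most $i + D$: at any intermediate node it can be delayed only by the (at most $i-1$) messages preceding it in the order, and its own traversal up the tree accounts for at most $D$ edges. Summing, $r$ has received all $M$ messages after $O(M + D)$ rounds, after which the broadcast phase finishes in another $O(M+D)$ rounds by the same argument, giving the stated bound.

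The only place where the broadcast restriction (as opposed to full point-to-point \textsc{Congest}) intervenes is in the bookkeeping of the convergecast step: one must verify that letting a child's broadcast be overheard by non-parent neighbors does not cause any accounting problem, and that several children broadcasting simultaneously to a common parent do not exceed the one-message-per-incident-edge bandwidth. Both are immediate because the bandwidth of the broadcast congest model is naturally measured per incident edge per round, so a node with $k$ children still absorbs $k$ messages per round, exactly as in the point-to-point \textsc{Congest} version of the standard pipelining proof.
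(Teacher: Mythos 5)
Your proof is correct and amounts to the same folklore pipelining result the paper invokes, but it organizes the data movement differently: you run a pipelined convergecast to a root followed by a pipelined broadcast from the root, whereas the paper performs a single flooding pass in which every node floods its own messages through the BFS tree and flooding operations for smaller messages take priority. The two-phase route is slightly cleaner to reason about (two unidirectional pipelines), while the paper's single flood avoids the extra constant factor. One thing the paper handles that you omit is \emph{termination detection}: the paper first determines an upper bound $d\in[D,2D]$ on the tree depth and the total message count $|M|$ via a convergecast, so that every node can decide locally when to stop; your sketch does not address how nodes recognize that the broadcast is complete.

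A second, more technical remark on the convergecast phase: the invariant you state, ``the $i$-th message reaches $r$ by round at most $i+D$,'' is true but the one-line justification (``it can be delayed only by the at most $i-1$ messages preceding it'') hides a subtlety. With the greedy rule ``send the smallest message currently in your buffer,'' a node can send message $j$ before a smaller message $i<j$ from deeper in its subtree has arrived, so messages are not forwarded strictly in global order. The bound still holds, but the argument that a given smaller message cannot cause more than one unit of total delay to message $i$ along the path to $r$ needs a short inductive potential argument; it is not immediate from ``at most $i-1$ predecessors.'' This is the same level of informality as the paper's own ``a flooding operation may be delayed once for each flooding operation for a smaller message,'' so it is acceptable for a proof sketch citing Peleg, but be aware that it is the crux of the pipelining lemma rather than an obvious observation.
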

\begin{proof}[Proof Sketch]
It is easy to construct a BFS tree in $O(D)$ rounds (rooted at, e.g., the node with smallest identifier) and obtain an upper bound $d\in [D,2D]$ on the hop diameter by determining the depth of the tree and multiplying it by $2$. By a convergecast, we can determine $|M|$, where each node in the tree determines the total number of messages in its subtree. We define a total order on the messages via lexicographical order on node identifier/message pairs.\footnote{W.l.o.g., we assume identifier/message pairs to be different.} Then nodes flood their messages through the tree, where a flooding operation for a message may be delayed by those for other messages which are smaller w.r.t.\ the total order on the messages. On each path, a flooding operation may be delayed once for each flooding operation for a smaller message. Hence, this operation completes within $O(D+|M|)$ rounds, and using the knowledge on $d$ and~$M$, nodes can safely terminate within $O(d+|M|)=O(D+|M|)$ rounds.
\end{proof}
We obtain the following corollary to Theorem~\ref{theorem:BCC transshipment_asym}.
\begin{corollary}\label{coro:bcc}
Given any $ 0 < \varepsilon \leq 1 $, in the Broadcast CONGEST model a $(1+\varepsilon)$-approximate primal-dual pair of solutions to the transshipment problem in bidirected graphs with positive arc weights can be computed deterministically in $ \widetilde  O (\lambda^2 n \varepsilon^{-2}) $ rounds.
\end{corollary}
\begin{proof}
Simulate each round on the broadcast congested clique using Lemma~\ref{lemma:broadcast}, i.e., with parameters $|M|=n$ and $D\leq n$. Applying Theorem~\ref{theorem:BCC transshipment_asym}, the claim follows.
\end{proof}

\subparagraph*{Special Case: Single-Source Shortest Paths}
The near-linear running time bound of Corollary~\ref{coro:bcc} is far from the $\widetilde{\Omega}(\sqrt{n}+D)$ lower bound, which also applies to the single-source shortest path problem. However, for this problem there is an efficient reduction to a smaller \emph{skeleton graph}, implying that we can match the lower bound up to polylogarithmic factors. The skeleton graph is given as an overlay network on a subset $V'\subseteq V$ of the nodes, where each node in $V'$ learns its incident arc and their weights.

\begin{theorem}[\cite{HenzingerKN16}]\label{theorem:reduction}
Given any $ 0 < \varepsilon \leq 1 $ and a source node $ s $, in the Broadcast CONGEST model an overlay network $G'=(V',E')$ of an undirected graph $ G $ with positive integer edge weights that has edge weights $w':E'\to \{1, \ldots, O(n \| w \|_\infty) \}$ together with some additional information for every node satisfying the following properties with high probability can be computed in $\widetilde{O}(\sqrt{n} \epsilon^{-1} \log^2 \| w \|_\infty + D)$ rounds:
\begin{itemize}
  \item $|V'|= \widetilde{O}(\sqrt{n} \varepsilon^{-1} \log \| w \|_\infty) $ and $s\in V'$.
  \item For every node $ u \in V $, as soon as $ u $ receives, for every $ v \in V' $, a $ (1 + \tfrac{\varepsilon}{3}) $-approximation of the distance from $ s $ to $ v $ in $ G'$, it can infer a $ (1 + \epsilon) $-approximation of the distance from $ s $ to $ u $ in $ G $ without any additional communication.
\end{itemize}
\end{theorem}

To see why this reduction to the skeleton graph also works in bidirected graphs with asymmetric arc weights at the expense of a factor of $ \lambda $, let us briefly review the construction of Henzinger et al.~\cite{HenzingerKN16}.
In the first step, a set of nodes $ V' $ is computed such that every path $ \pi $ consisting of $ \sqrt{n} $ edges contains a node $ u $ such that the $h^*$-hop distance from $ u $ to $ V' $ is at most $ \tfrac{\epsilon}{10} $ times the weight of~$ \pi $; the $h^*$-hop distance is the length of the shortest path with at most $h^*$ edges for some $h^* = \widetilde O (\sqrt{n}) $ chosen by the algorithm.
In the second step, the $ (1 + \tfrac{\epsilon}{10}) $-approximate $k$-hop distance from each node of~$ V' $ to each node of $ V $ is computed where $ k = \widetilde O (\sqrt{n}) $.
The edge weights of the overlay network are set to the approximate $k$-hop distances between nodes in~$ V' $.
The remaining approximate distances computed in this step are also stored and used in the end to determine a $ (1 + \epsilon) $-approximation of the distance from $ s $ for each node.
Furthermore, the construction guarantees that an implicit $ (1 + \epsilon) $-approximate shortest path tree can be obtained if every node $ v $ chooses the neighbor $ u $ as its parent that minimizes the sum of the approximate distance from $ s $ to~$ u $ and the weight of the arc from $ u $ to~$ v $.

In bidirected graphs, we simply run the first step of the algorithm with $ \epsilon' = \tfrac{\epsilon}{\lambda 10} $ on $ G^+ $, the graph in which every bidirected edge $ e $ with edge weights $ w^+_e $ and $ w^-_e $ for the forward and backward arc, respectively, is replaced by an undirected edge of weight $ w^+_e $ (recall that $ w^+_e \geq w^-_e $).
Observe that for every path, the total weight in~$ G^+ $ exceeds its total weight in $ G $ by a factor of at most~$ \lambda $.
Let $ \pi $ be a path consisting of $ \sqrt{n} $ arcs.
The overlay network construction guarantees that $ \pi $ contains a node~$ u $ such that the $h$-hop distance from $ u $ to $ V' $ in $ G^+ $ is at most $ \epsilon' $ times the weight of~$ \pi $ in~$ G^+ $.
As the $h$-hop distance from $ u $ to $ V' $ in $ G $ is at most the $h$-hop distance from $ u $ to $ V' $ in $ G^+ $ and the weight of $ \pi $ in $ G^+ $ exceeds its weight in $ G $ by a factor of at most~$ \lambda $, we get the following:
the $h$-hop distance from $ u $ to $ V' $ in $ G^+ $ is at most $ \epsilon' \lambda = \tfrac{\epsilon}{10} $ times the weight of~$ \pi $ in~$ G $.
We can now proceed with the second step of the overlay network construction, which is based on a distance approximation technique that pipelines multiple (single-source) Moore-Bellman-Ford instances that also works in directed graphs and thus does not need to be adapted.
In this way, we can construct an overlay network with $\widetilde{O}(\lambda \sqrt{n} \varepsilon^{-1} \log \| w \|_\infty) $ nodes in $\widetilde{O}(\lambda \sqrt{n} \epsilon^{-1} \log^2 \| w \|_\infty + D)$ rounds.
Observe that the derived overlay network is guaranteed to have ratio at most $(1 + \epsilon) \lambda = O (\lambda) $ between forward and backward arcs.
We can thus apply the previous simulation approach as follows.

\begin{corollary}\label{coro:reduction_asym}
Given any $ 0 < \varepsilon \leq 1 $, in the Broadcast CONGEST model a $(1+\varepsilon)$-approximation to the single-source shortest path problem in bidirected graphs with positive polynomially bounded integer arc weights can be computed in $ \widetilde O (c (\sqrt{n} + D) \log \| w \|_\infty \lambda^2 \varepsilon^{-3}) = \widetilde O (c (\sqrt{n} + D) \lambda^2 \varepsilon^{-3}) $ rounds with probability at least $ 1 - n^{-c} $ for every $c \geq 1 $.
\end{corollary}
\begin{proof}
We apply Theorem~\ref{theorem:reduction}. Subsequently, we use Lemma~\ref{lemma:broadcast} to simulate rounds of the Broadcast Congested Clique on the overlay network, taking $ O (|V'| + D) = \widetilde{O}(\sqrt{n} \varepsilon^{-1} \log \| w \|_\infty + D) $ rounds per simulated round. Using Theroem~\ref{theorem:BCC sssp_asym} for $\varepsilon' = \tfrac{\varepsilon}{3} $, we obtain a $(1+\varepsilon')$-approximation to the distances from $ s $ to each $v\in V'$ in the overlay network. After broadcasting these distances using Lemma~\ref{lemma:broadcast} again, all nodes can locally compute a $(1+\varepsilon)$-approximation of their distance from~$s$. The total running time is $\widetilde{O}((\sqrt{n} \varepsilon^{-1} \log \| w \|_\infty + D) c \lambda^2 \varepsilon^{-2} + \sqrt{n} \epsilon^{-1} \log^2 \| w \|_\infty ) = \widetilde O (c (\sqrt{n} + D) \lambda^2 \varepsilon^{-3} \log^2 \| w \|_\infty) $, which is $ \widetilde O (c (\sqrt{n} + D) \lambda^2 \varepsilon^{-3}) $ since $ \| w \|_\infty $ is polynomially bounded.
\end{proof}
As discussed at the beginning of Section~\ref{sec:transshipment} this result extends to graphs with arbitrary non-negative arc weights at the cost of mild overheads.

If a Monte-Carlo guarantee suffices, then we can get a better dependence on $ \epsilon $ by using the overlay network construction of Nanongkai~\cite{Nanongkai14}.
\begin{theorem}[\cite{Nanongkai14}]\label{theorem:randomized overlay}
Given any $ 0 < \varepsilon \leq 1 $ and any $ 1 \leq h \leq n $, in the Broadcast CONGEST model an overlay network $G'=(V',E')$  of a bidirected graph $ G $ with positive integer arc weights that has arc weights $w':E'\to \{1, \ldots, O(n \| w \|_\infty) \}$ together with some additional information for every node satisfying the following properties with high probability can be computed in $\widetilde{O}((\tfrac{h}{\epsilon} + \tfrac{n}{h}) \log \| w \|_\infty + D)$ rounds:
\begin{itemize}
  \item $|V'|= \widetilde{O}(\tfrac{n}{h}) $ and $s\in V'$.
  \item For every node $ u \in V $, as soon as $ u $ receives, for every $ v \in V' $, a $ (1 + \tfrac{\varepsilon}{3}) $-approximation of the distance from $ s $ to $ v $ in $ G'$, it can infer a $ (1 + \epsilon) $-approximation of the distance from $ s $ to $ u $ in $ G $ without any additional communication. This approximate distances are correct with high probability.
\end{itemize}
\end{theorem}

Note that this overlay network construction readily works on directed graphs.
By setting $ h = \lambda \sqrt{n} \epsilon^{-1/2} $, we obtain an overlay network with $ |V'| = \widetilde O (\sqrt{\epsilon n} \lambda^{-1}) $ that can be constructed in $ \widetilde O (\lambda \sqrt{n} \varepsilon^{-3/2} \log \| w \|_\infty + D) $ rounds.
Using the same simulation approach as before, we get the following result.
\begin{corollary}\label{coro:reduction_asym Monte Carlo}
Given any $ 0 < \varepsilon \leq 1 $, in the Broadcast CONGEST model a $(1+\varepsilon)$-approximation to the single-source shortest path problem in bidirected graphs with positive polynomially bounded integer arc weights that is correct with high probability can be computed in $ \widetilde O ((\sqrt{n} + D) \lambda \varepsilon^{-3/2} \log \| w \|_\infty) = \widetilde O ((\sqrt{n} + D) \lambda \varepsilon^{-3/2}) $ rounds.
\end{corollary}

If approximating the distance between two node suffices, then both of these nodes can be made part of the overlay network and we only need to compute the approximate distance between these two nodes on the overlay network.

\begin{theorem}[Implicit in \cite{HenzingerKN16}]
Given any $ 0 < \varepsilon \leq 1 $ and two nodes $ s $ and~$ t $, in the Broadcast CONGEST model an overlay network $G'=(V',E')$ of an undirected graph $ G $ with positive integer edge weights that has edge weights $w':E'\to \{1, \ldots, O(n \| w \|_\infty) \}$ satisfying the following properties can be computed deterministically in $\widetilde{O}(\sqrt{n} \epsilon^{-1} \log^2 \| w \|_\infty + D)$ rounds:
\begin{itemize}
  \item $|V'|= \widetilde{O}(\sqrt{n} \varepsilon^{-1} \log \| w \|_\infty) $ and $\{s, t \} \subseteq V'$.
  \item A $ (1 + \tfrac{\varepsilon}{3}) $-approximation of the distance from $ s $ to~$ t $ in $ G'$ is a $ (1 + \epsilon) $-approximation of the distance from $ s $ to~$ t $ in $ G $
\end{itemize}
\end{theorem}

As we can compute the $s$-$t$ distance on the overlay network by running our transshipment algorithm with demand $ -1 $ on $ s $ and demand $ 1 $ on $ t $ (and demand $ 0 $ on all other nodes), we obtain a deterministic algorithm for approximating the $s$-$t$ distance.

\begin{corollary}
Given any $ 0 < \varepsilon \leq 1 $ and two nodes $ s $ and $ t $, in the Broadcast CONGEST model a $(1+\varepsilon)$-approximation to the distance from $ s $ to $ t $ in a bidirected graph with positive polynomially bounded integer arc weights can be computed deterministically in $ \widetilde O ((\sqrt{n} + D) \log \| w \|_\infty \lambda^2 \varepsilon^{-3}) = \widetilde O ((\sqrt{n} + D) \lambda^2 \varepsilon^{-3}) $ rounds.
\end{corollary}

\subsection{Multipass Streaming}

\subparagraph*{Model}

In the \emph{Streaming} model the input graph is presented to the algorithm edge by edge as a ``stream'' without repetitions and the goal is to design algorithms that use as little space as possible. Space is measured in memory words, where we assume that each word consists of $ O (\log n) $ bits.
The weights are assumed to be polynomially bounded in~$n$, so that they can be encoded using $O(\log n)$ bits.
In the \emph{Multipass Streaming} model, the algorithm is allowed to make several such passes over the input stream and the goal is to design algorithms that need only a small number of passes (and again little space).
For graph algorithms, the usual assumption is that the edges of the input graph are presented to the algorithm in arbitrary order.
For bidirected graphs, we assume that both arcs of each edge are presented consecutively in the stream.

\subparagraph*{Implementing Algorithm~\nablaalg}
In the following we explain how to implement our gradient descent algorithm for approximating the shortest transshipment.

\begin{theorem}\label{theorem:streaming transshipment_asym}
Given any $ 0 < \varepsilon \leq 1 $, in the Multipass Streaming model a $(1+\varepsilon)$-approximate dual solution to the transshipment problem in bidirected graphs with positive arc weights can be computed deterministically in $ \widetilde O (\lambda^2 \varepsilon^{-2}) $ passes with a space of $ O (n \log n) $ words.
\end{theorem}

Note that we will not explicitly compute a primal solution $ x \in \mathbb{R}^{2 m} $ because storing this solution might take too much space, i.e., we will not implement Lines \ref{line:primal solution xt} and~\ref{line:primal solution x} of Algorithm~\ref{alg:gradient}.

\begin{proof}[Proof of Theorem~\ref{theorem:streaming transshipment_asym}]
We will use $ O (n \log {n}) $ space to store the following information.
\begin{itemize}
\item An $n$-dimensional vector $ b $ for the input demands.
\item An $ O (\log{n}) $-spanner $ S $ of $ G^- = (V, E, w^-) $ of size $ O (n \log n) $.
\item An $ n $-dimensional vector $ \pi $ for the dual solution maintained by the gradient descent algorithm.
\item An $ O (n \log n) $-dimensional vector $ f $ for the primal solution on the spanner.
\item An $ n $-dimensional vector $ h $ for the dual solution on the spanner.
\item Scalars $ m $, $ n $, $ \varepsilon $, $ \lambda $, $ \alpha $, $ \varepsilon' $, $ \beta $, and $ \delta $.
\end{itemize}
Additionally, we use ``temporary space'' of size $ O (n \log {n}) $ for implementing individual lines of the algorithm.
At any time, the total space will be $ O (n \log {n}) $.

We will show that each iteration of Algorithm~\ref{alg:gradient} can be implemented in a single pass and that the remainder of the algorithm can be implemented in $ O (\log n) $ passes.
By Theorem~\ref{theorem:asymmetric}, which bounds the number of iterations of Algorithm~\ref{alg:gradient}, the bound on the number of passes is then $ O (\lambda^2 (\varepsilon^{-2} + \log{n} + \log{\lambda}) \log^3{n}) $.
If $ \lambda $ is as large as polynomial in $ n $ we can solve the problem in a polynomial number of passes, i.e., $ O (\lambda) $ passes, by using any of the polynomial-time algorithms.
Therefore, this simplifies to $ \widetilde O (\lambda^2 \varepsilon^{-2}) $.

At the beginning of the algorithm, we first compute and store $n$, $m$, and $ \lambda $ in a single pass.
We then construct and store an $ \alpha $-spanner $ S $ of $ G^- = (V, E, w^-) $, where $ \alpha = 2 \lceil \log n\rceil - 1 $, with $ O (n \log{n}) $ edges in $O(\log n)$ passes with $ O (n \log{n}) $ temporary space using the algorithm of Corollary~\ref{coro:spanner streaming}.
The spanner can be used to implement the oracle calls in Lines \ref{line:initial oracle call} and~\ref{line:oracle call in iteration} using only internal computation as on the spanner the $ O(n \log n)$-dimensional primal solution and the $ n $-dimensional dual solution can be determined in $ O (n \log n) $ space by enumerating all potential solutions and checking for optimality.\footnote{Alternatively, we could internally run one of the classic linear-space algorithms for the transshipment problem.}
It remains to describe how to implement Lines \ref{line:set_beta} and~\ref{line:double_beta}, where the algorithm needs to compute $ (W^{-1}A^T\pi)_{\max} $, and Line~\ref{line:compute gradient}, where the algorithm computes $ \nabla\potfun_\beta(\pi) $.
It is clear that the rest of the algorithm can be implemented by performing only local computation.

To implement Lines \ref{line:set_beta}, \ref{line:compute gradient}, and~\ref{line:double_beta}, first observe that, using the definition $ s_e (\pi) := \tfrac{\pi_v - \pi_u}{w_e} $ for every arc $ e \in E $, both $ \sum_{e \in E} e^{\beta s_e (\pi)} $ and $ (W^{-1}A^T\pi)_{\max} = \max \{ s_e (\pi) : e = (u, v) \in E \} $ can be computed in a single pass with $ O (1) $ temporary space as summation and maximum are associative and commutative operators:
Before the pass, we create temporary variables $ \sumarcs $ and $ \maxarcs $, both initialized to $ 0 $.
During the pass, every time we read an arc $ e $ of weight $ w_e $ from the stream, we first compute $ s_e (\pi) $ and then update $ \sumarcs $ to $ \sumarcs + e^{\beta s_e (\pi)} $ and $ \maxarcs $ to $ \max \{ \maxarcs, s_e (\pi) \} $.
After the pass, we have $ \sumarcs = \sum_{e \in E} e^{\beta s_e (\pi)} $ and $ \maxarcs = (W^{-1}A^T\pi)_{\max} $.
Since each component of $ \nabla \potfun_\beta(\pi) $ is given by
\begin{align*}
\nabla \potfun_\beta(\pi)_v
  &= (A W^{-1} \nabla\softmax[\beta]{W^{-1}A^T\pi})_v \\
  &= \sum_{e = (u, v) \in E} \frac{e^{\beta s_e (\pi)}}{w_e \cdot \sum_{e' \in E} e^{\beta s_{e'} (\pi)}} - \sum_{e = (v, u) \in E} \frac{e^{\beta s_e (\pi)}}{w_e \cdot \sum_{e' \in E} e^{\beta s_{e'} (\pi)}} \, ,
\end{align*}
the same idea can be used to compute $ \nabla \potfun_\beta(\pi) $ in a single pass with $ O (1) $ temporary space, once $ \sumarcs = \sum_{e \in E} e^{\beta s_e (\pi)} $ is known.
\end{proof}

\subparagraph*{Implementing Algorithm~\treealgo}

In the following we explain how to implement our sampling-based algorithm to compute a primal tree solution for the approximate transshipment problem.

\begin{theorem}\label{theorem:streaming tree solution_asym}
In the Multipass Streaming model one can, with probability at least~$ \tfrac{1}{4} $, compute in $ \widetilde O (\lambda^2 \varepsilon^{-2}) $ passes with a space of $ O (\lambda n \log^2 {n}) $ words a $(1+\varepsilon)$-approximate primal-dual pair of solutions to the transshipment problem given a parameter $ 0 < \varepsilon \leq 1 $, a bidirected graph with positive polynomially bounded integer arc weights, and a demand vector~$ b $ such that $b_v\in \{0,1\}$ for all $v\in V\setminus \{s\}$ and with the additional property that the primal solution has non-zero flow only on the arcs of a tree.
\end{theorem}

\begin{proof}
Again, we do not want to explicitly store the primal solution $ x $ computed at the end of Algorithm~\nablaalg to save space.
We will be able to retrieve $ x_e $ for every arc $ e $ ``on the fly'' upon reading $ e $ from the stream by modifying Algorithm~\nablaalg to additionally return $ f' $, $ \pi $, $ z $, and $ \beta $.

We will use space $ O (n \lambda \log n \log \| w \|_\infty \rceil) $, which is $ O (n \lambda \log^2 {n}) $ since $ \| w \|_\infty $ is polynomially bounded, to store the following information:
\begin{itemize}
\item An arc set $ \bigcup_{v, k} F_{v, k} $ of size $ O (\lambda \log n) $ for every $ v \in V $ and every $ k \in \{1, \ldots,  \lceil \log \| w \|_\infty \rceil \} $.
\item An $ O (\log{n}) $-spanner $ S $ of $ G $ of size $ O (n \log n) $.
\item An $ (n - 1) $-dimensional vector $ t $ for the primal solution on $ S $.
\item An $ n $-dimensional vector $ y' $ for the dual solution on $ S $.
\item A scalar to store $ \lceil \log \| w \|_\infty \rceil $
\item An $ O (n \log n) $-dimensional vector $ f' $ as computed by the call to the transshipment algorithm.
\item An $ n $-dimensional vector $ \pi $ as computed by the call to the transshipment algorithm.
\item Scalars $ \beta $ and $ z $ as computed by the call to the transshipment algorithm.
\end{itemize}

First, observe that Lines \ref{line:construct auxiliary graph} to~\ref{line:last line of primal tree} of Algorithm~\treealgo can be performed without performing additional passes.
Furthermore, by Theorem~\ref{theorem:streaming transshipment_asym}, the call to Algorithm~\nablaalg in Line~\ref{line:refined approximation to transshipment} takes $ O (\lambda^2 (\varepsilon^{-2} + \log{n} + \log{\lambda}) \log^3{n}) $ passes and space $ O (n \log {n}) $.
As before, we can implement Line~\ref{line:construct spanner S} by constructing and storing an $ \alpha $-spanner $ S $, where $ \alpha = O (\log n) $, with $ O (n \log{n}) $ edges in $O(\log n)$ passes with $ O (n \log{n}) $ temporary space using the algorithm of Corollary~\ref{coro:spanner streaming}.

It remains to explain the implementation of the sampling process in Lines~\ref{line:begin sampling} to~\ref{line:end sampling}.
Before the sampling starts, we compute $ \lceil \log \| w \|_\infty \rceil $ in a single pass.
We first explain how to compute $ x_e $ (the primal solution for $ e $ in the call to Algorithm~\nablaalg in Line~\ref{line:refined approximation to transshipment}) upon reading $ e $ from the stream.
Remember that
\begin{equation*}
\xt_e = (W^{-1}\nabla \softmax[\beta]{W^{-1}A^T\pi})_e = \frac{e^{\beta s_e (\pi)}}{w_e \sum_{e' \in E} e^{\beta s_{e'} (\pi)}} \, ,
\end{equation*}
and $ x_e = \tfrac{\xt_e + f'_e}{z} $.
We can therefore first compute $ \sumarcs = \sum_{e \in E} e^{\beta s_e (\pi)} $ in an initial pass and later on instantly compute $ x_e $ using the formulas above.

We will perform the sampling in parallel for each $ v \in V $ and for each $ k \in \{ 1, \ldots \lceil \log \| w \|_\infty \rceil \} $ in $ O (\log (n \| w \|_\infty)) $ passes, which is $ O (\log n) $ since $ \| w \| $ is polynomially bounded.
Every time we read an arc $ e $ from the stream, we compute the unique $ k $ for which $ w_e \in [2^{k-1}, 2^k) $.
We spend the first pass to compute $ \sum_{e \in E_{v,k}} x_e $  for each $ v \in V $ and each $ k \in \{ 1, \ldots \lceil \log \| w \|_\infty \rceil  \} $.
We spend the remaining $ \lceil \log (4 n\lceil \log \| w \|_\infty \rceil ) \rceil $ passes to repeatedly sample a set $ F_{v, k} $ for each $ v \in V $ and each $ k \in \{ 1, \ldots \lceil \log \| w \|_\infty \rceil \} $.
Note that the sampling probability $ p_e $ can be computed instantly upon reading $ e $ from the stream.
\end{proof}

\subparagraph*{Implementing Algorithm~\ssspalgo}

Finally, we explain how to implement our adaptive-demands algorithm for computing an approximate single-source shortest path tree.

\begin{theorem}\label{theorem:streaming SSSP_asym}
Given any $ 0 < \varepsilon \leq 1 $, in the Multipass Streaming model a $(1+\varepsilon)$-approximation to the single-source shortest path problem in bidirected graphs with positive polynomially bounded integer arc weights can be computed in $ \widetilde O (c \lambda^2 \varepsilon^{-2}) $ passes with probability $ 1 - n^{-c} $ for every $ c \geq 1 $ and with a space of $ O (n \lambda \log^2 {n} )) $.
\end{theorem}

\begin{proof}
To implement Algorithm~\ssspalgo, we will, in addition to the streaming implementation of Algorithm~\treealgo, use space $ O (n \log n \log \| w \|_\infty) $, which is $ O (n \log^2 n) $ since $ \| w \|_\infty $ is polynomially bounded, to store the following information:
\begin{itemize}
\item An arc set $ E' $ of size $ O (n \log \| w \|_\infty) $.
\item A tree $ T $ consisting of $ n - 1 $ arcs.
\item An $n$-dimensional vector $ y $.
\item An $n$-dimensional vector $ b $.
\end{itemize}

By Theorem~\ref{theorem:tree}, the number of iterations of the while-loop is $ O (c \log \| w \|_\infty) $ with probability $ 1 - n^{-c} $.
Observe that Line~\ref{line:compute primal tree solution}, which calls the algorithm for computing a primal tree solution, is the only place in the algorithm where reading from the stream is necessary.
Thus, the number of passes for implementing Algorithm~\ssspalgo exceeds those for computing a primal tree solution (Theorem~\ref{theorem:streaming tree solution_asym}) by a factor of $ O (c \log \| w \|_\infty) $ with probability $ 1 - n^{-c} $.
Furthermore, the additional space we need is dominated by the size of $ E' $ which is $ O (n \log \| w \|_\infty) $ (as in each iteration a tree of size $ n - 1 $ is added to $ E' $).
\end{proof}

As discussed at the beginning of Section~\ref{sec:transshipment} this result extends to graphs with arbitrary non-negative arc weights at the cost of mild overheads.
Note that in the literature the approximate SSSP problem has also been studied in a relaxation of the Multipass Streaming model that does not require the weights to be polynomially bounded by increasing the word size to $ \Theta (\log n + \log \| w \|_\infty) $~\cite{ElkinN16}.
By a reduction of Klein and Subramanian~\cite{KleinS97}, that has been adapted to the Multipass Streaming model by Elkin and Neiman~\cite{ElkinN16}, we still may assume without loss of generality that our approximate SSSP algorithm is called on a graph with largest arc weight $ \| w \|_\infty = O (n \epsilon^{-1}) $.
Furthermore, we may assume that $ \epsilon^{-1} = O (n) $ as our running time anyway depends on $ \epsilon^{-1} $ and thus in the regime $ \epsilon^{-1} = \Omega (n) $ we might as well compute a solution with the Bellman-Ford algorithm, which takes $ O (n) $ passes and $ O (n) $ space.
In this way we arrive at polynomially bounded arc weights again.

\section{Deterministic Spanner Computation in Congested Clique and Multipass Streaming Model}\label{sec:spanner}

For $k\in \ZZ_{>0}$, a simple and elegant randomized algorithm computing a $(2k-1)$-spanner with $O (kn^{1+1/k})$ edges in expectation was given by Baswana and Sen~\cite{BaswanaS07}. For the sake of completeness, we restate it here.

\vbox{\vspace*{1ex}\hrule\vspace{1mm}
\begin{enumerate}
\item Initially, each node is a singleton \emph{cluster}: $R_1:=\{\{v\}\mid v\in V\}$.
\item For $i=1,\ldots,k-1$ do:
\begin{enumerate}
\item Each cluster from $R_i$ is \emph{marked} independently with probability $n^{-1/k}$. $R_{i+1}$ is defined to be the set of clusters marked in phase $i$.
\item If $v$ is a node in an unmarked cluster:
\begin{enumerate}
\item Define $Q_v$ to be the set of edges that consists of the lightest edge from $v$ to each cluster in $R_i$ it is adjacent to.
\item If $v$ is not adjacent to any marked cluster, all edges in $Q_v$ are added to the spanner.
\item Otherwise, let $u$ be the closest neighbor of $v$ in a marked cluster. In this case, $v$ adds to the spanner the edge $\{v,u\}$ and all edges $\{v,w\}\in Q_v$ with $w_{(v,w)}<w_{(v,u)}$ (break ties by neighbor identifiers). Also, let $X$ be the cluster of $u$. Then $X:=X\cup \{v\}$, i.e., $v$ \emph{joins} the cluster of $u$.
\end{enumerate}
\end{enumerate}
\item Each $v\in V$ adds, for each $X\in R_k$ it is adjacent to, the lightest edge connecting it to $X$ to the spanner.
\end{enumerate}\smallskip
\hrule\vspace{2mm}
}

It is easy to see that the expected number of edges that the algorithm selects into the spanner is $O(kn^{1+1/k})$: In each iteration, each node $v$ sorts its incident clusters in order of ascending weight of the lightest edge to them and elects for each cluster, up to the first sampled one, the respective lightest edge into the spanner. Because this order is independent of the randomness used in this iteration, $v$ selects $O(n^{1/k})$ edges in expectation and $O(n^{1/k}\log n)$ edges with high probability.\footnote{That is, for any fixed constant choice of $c>0$, the number of selected edges is bounded by $O(n^{1/k}\log n)$ with probability at least $1-1/n^c$.} The same bound applies to the final step, as $|R_k|\in O(n^{1/k})$ in expectation and $|R_k|\in O(n^{1/k}\log n)$ with high probability. Moreover, this observation provides a straightforward derandomization of the algorithm applicable in our model: Instead of picking $R_{i+1}$ in iteration $i$ randomly, we consider the union $E_i$ over all nodes $v$ of the lightest $O(n^{1/k}\log n)$ edges in $Q_v$. By a union bound, with high probability we can select $R_{i+1}$ such that (i) $|R_{i+1}|\leq n^{-1/k}|R_i|$ and (ii) each node selects only $O(n^{1/k}\log n)$ edges into the spanner in this iteration. In particular, such a choice must exist, and it can be computed from $R_i$ and $E_i$ alone.
With this argument we deterministically obtain a spanner of size $ O (k n^{1+1/k} \log{n}) $.

\vbox{\vspace*{1ex}\hrule\vspace{1mm}
\begin{enumerate}
\item Initially, each node is a singleton \emph{cluster}: $R_1:=\{\{v\}\mid v\in V\}$.
\item For $i=1,\ldots,k-1$ do for each node $v$:
\begin{enumerate}
\item Define $Q_v$ to be the set of edges that consists of the lightest edge from $v$ to each cluster in $R_i$ it is adjacent to.
\item Broadcast the set $Q_v'$ of the lightest $O(n^{1/k}\log n)$ edges in $Q_v$.
\item For $w\in V$, denote by $X_w\in R_i$ the cluster so that $v\in X_w$. Locally compute $R_{i+1}\subseteq R_i$ such that (i) $|R_{i+1}|\leq n^{-1/k}|R_i|$ and (ii) for each $w\in V$ for which $Q_v'\neq Q_v$, it holds that $X_w\in R_{i+1}$ or $Q_v'$ contains an edge connecting to some $X\in R_{i+1}$.
\item Update clusters and add edges to the spanner as the original algorithm would, but for the computed choice of $R_{i+1}$.
\end{enumerate}
\item Each $v\in V$ adds, for each $X\in R_k$ it is adjacent to, the lightest edge connecting it to $X$.
\end{enumerate}\smallskip
\hrule\vspace{2mm}
}

A slightly stronger bound of $ O (k n^{1+1/k}) $ on the size of the spanner, matching the expected value from above up to constant factors, can be obtained in this framework by using the technique of deterministically finding \emph{early hitting sets} developed by Roditty, Thorup, and Zwick~\cite{RodittyTZ05}.

Note that, as argued above, the selection of $R_{i+1}$ in Step 2(c) is always possible and can be done deterministically, provided that $R_i$ is known. Because $R_1$ is simply the set of singletons and each node computes $R_{i+1}$ from $R_i$ and the $Q_v'$, this holds by induction. We arrive at the following result.
\begin{corollary}[of~\cite{BaswanaS07} and~\cite{RodittyTZ05}]\label{coro:deterministic spanner}
For $k\in \ZZ_{>0}$, in the broadcast congested clique a ${(2k-1)}$-spanner of size $ O (kn^{1+1/k}) $ can be deterministically computed and made known to all nodes in $O(kn^{1/k}\log n)$ rounds.
\end{corollary}
\begin{proof}
The bound of $\alpha=2k-1$ follows from the analysis in~\cite{BaswanaS07}, which does not depend on the choice of the $R_i$. For the round complexity, observe that all computations can be performed locally based on knowing $Q_v'$ for all nodes $v\in V$ in each iteration. As $|Q_v'|\in O(n^{1/k} \log n)$ for each iteration and each $v$, the claim follows.
\end{proof}

By similar arguments, we can also get an algorithm in the multipass streaming algorithm with comparable guarantees.
We remark that, apart from the property of being deterministic, this was already stated in~\cite{Baswana08} as a simple consequence of~\cite{BaswanaS07}.

\begin{corollary}[of~\cite{BaswanaS07} and~\cite{RodittyTZ05}]\label{coro:spanner streaming}
For $k\in \ZZ_{>0}$, in the multipass streaming model a ${(2k-1)}$-spanner of size $ O (kn^{1+1/k}) $ can be deterministically computed in $ k $ passes using $ O ((k + \log{n}) n^{1+1/k}) $ space.
\end{corollary}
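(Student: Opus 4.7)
The plan is to lift the deterministic version of Baswana-Sen (with the Roditty-Thorup-Zwick early-hitting-set selection spelled out earlier in this appendix) into the multipass streaming model by executing one clustering iteration per pass. Throughout the computation we permanently store (i) the growing spanner and (ii) the current clustering $R_i$ represented as a labeling of each node by its cluster identifier. By the size bound $O(kn^{1+1/k})$ the spanner fits in the claimed space, and the labeling uses $O(n\log n)$ bits. The $k-1$ clustering iterations become $k-1$ passes, and a final pass records lightest edges to clusters in $R_k$, giving $k$ passes in total.

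During pass $i$ the goal is to compute, for every node $v$, the set $Q_v'$ consisting of the $O(n^{1/k}\log n)$ lightest edges from $v$ to distinct clusters in $R_i$ that the RTZ derandomization requires. For each node $v$ we maintain a dictionary, keyed by cluster identifier, of size $O(n^{1/k}\log n)$. When an edge $\{u,v\}$ of weight $w$ is read, we look up the stored cluster labels $X_u$ and $X_v$ and attempt to insert $(X_v,w)$ into $v$'s dictionary for $u$ (and symmetrically): if an entry for the same cluster exists we overwrite it when $w$ is smaller, and if the dictionary is already full we evict its heaviest entry whenever $w$ is lighter than it. After the pass every dictionary equals $Q_v'$ exactly, and the working space over all nodes is $O(n^{1+1/k}\log n)$ words.

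Once the pass has finished, all further work is local: we invoke the RTZ early-hitting-set subroutine to pick $R_{i+1}\subseteq R_i$ with $|R_{i+1}|\le n^{-1/k}|R_i|$ such that every node $v$ whose $Q_v'$ is truncated has either its own cluster or an endpoint of an edge of $Q_v'$ in $R_{i+1}$. Given $R_{i+1}$ and the stored $Q_v'$ we commit the prescribed spanner edges and update cluster labels exactly as in Baswana-Sen. The final pass is structurally the same but records, for every $v$ and every $X\in R_k$ adjacent to $v$, the lightest crossing edge; since $|R_k|=O(n^{1/k})$, per-node dictionaries of $O(n^{1/k})$ entries suffice, well within the same space budget.

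The stretch guarantee $\alpha=2k-1$ and the edge-count bound $O(kn^{1+1/k})$ are inherited verbatim from the deterministic analysis above, which is not affected by reordering over an edge stream; the hitting-set property of $R_{i+1}$ ensures that whenever an unmarked node is adjacent to a marked cluster, the closest such cluster is already recorded in $Q_v'$, so we still join the correct cluster. The main technical subtlety is that one must never materialize the full $Q_v$, which may have up to $|R_i|$ entries — the RTZ argument precisely justifies the on-the-fly truncation to $O(n^{1/k}\log n)$ cluster slots that our dictionary maintenance performs in a single pass. Combining the permanent spanner storage $O(kn^{1+1/k})$ with the per-pass working memory $O(n^{1+1/k}\log n)$ yields the claimed space $O((k+\log n)n^{1+1/k})$.
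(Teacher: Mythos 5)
Your proposal is correct and fleshes out exactly what the paper elides with ``by similar arguments'': one clustering iteration per pass, per-node dictionaries of $O(n^{1/k}\log n)$ (cluster, lightest-edge) slots with lighter-edge overwrite and heaviest-entry eviction, followed by local RTZ hitting-set selection and cluster relabeling between passes. This is the same derandomized Baswana--Sen + Roditty--Thorup--Zwick scheme the appendix spells out for the broadcast congested clique, and your single-pass maintenance of $Q_v'$ is sound because an entry is evicted only when the dictionary already holds the capacity-many strictly lighter cluster representatives, so the final dictionary contents coincide with the lightest truncation of $Q_v$ regardless of stream order.
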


\printbibliography[heading=bibintoc] 

\end{document}